%% file: main_KDD.tex
\documentclass[sigconf]{acmart}

\AtBeginDocument{%
  }

\setcopyright{acmlicensed}
\copyrightyear{2026}
\acmYear{2026}
\acmDOI{XXXXXXX.XXXXXXX}

\usepackage{amsmath,amsthm,amsbsy,amssymb,amsfonts,mathtools,multirow,comment,booktabs,textcomp,tablefootnote,adjustbox,graphicx,stfloats,subcaption,makecell}

\usepackage{pgfplots}
\pgfplotsset{compat=1.18}
\usepgfplotslibrary{fillbetween}

\newtheorem{definition}{Definition}
\newtheorem{theorem}{Theorem}
\newtheorem{property}{Property}
\newtheorem{lemma}{Lemma}
\newtheorem{remark}{Remark}
\newtheorem{example}{Example}
\newtheorem{corollary}{Corollary}

\newcommand{\LOO}{\mathtt{LOO}}
\newcommand{\IOI}{\mathtt{IOI}}

\newcommand{\COS}{\mathtt{COS}}
\newcommand{\tA}{\text{A}}
\newcommand{\tB}{\text{B}}
\newcommand{\tC}{\text{C}}

\newcommand{\EE}{\mathtt{EE}}

\newcommand{\FP}{\mathtt{FP}}
\newcommand{\SV}{\mathtt{SV}}
\newcommand{\MRSV}{\mathtt{MR}-\mathtt{SV}}

\definecolor{colA}{HTML}{70d6ff}
\definecolor{colB}{HTML}{ff70a6}
\definecolor{colC}{HTML}{ff9770}
\definecolor{colD}{HTML}{ffd670}

\newcommand{\manualbox}[6]{%
  \addplot[fill=cyan!25, draw=black, thick] coordinates {
    (#1-0.25,#3) (#1+0.25,#3) (#1+0.25,#5) (#1-0.25,#5) (#1-0.25,#3)
  };
  \addplot[draw=orange, thick] coordinates {(#1-0.25,#4) (#1+0.25,#4)};
  \addplot[draw=black, thick] coordinates {(#1,#2) (#1,#3)};
  \addplot[draw=black, thick] coordinates {(#1,#5) (#1,#6)};
  \addplot[draw=black, thick] coordinates {(#1-0.15,#2) (#1+0.15,#2)};
  \addplot[draw=black, thick] coordinates {(#1-0.15,#6) (#1+0.15,#6)};
}

\begin{document}

\title{Beyond \texorpdfstring{\textit{Leave-One-Out}}{Leave-One-Out}: Private and Robust \\ Contribution Evaluation in Federated Learning}
%\title{Beyond \textit{Leave-One-Out}: Private and Robust \\ Contribution Evaluation in Federated Learning}

 \author{Delio Jaramillo Velez}
 \affiliation{%
   \institution{University of La Laguna}
   \city{Tenerife}
   \country{Spain}}
 \email{djaramil@ull.edu.es}

 \author{Gergely Biczok}
 \affiliation{%
   \institution{HUN-REN Hungarian Research Network}
   \city{Budapest}
   \country{Hungary}}
 \email{biczok@crysys.hu}
 
  \author{Alexandre Graell i Amat}
 \affiliation{%
 	\institution{Chalmers University}
 	\city{Gotheburg}
 	\country{Sweden}}
 \email{alexandre.graell@chalmers.se}
 
  \author{Johan Ostman}
 \affiliation{%
 	\institution{AI Sweden}
 	\city{Gotheborg}
 	\country{Sweden}}
 \email{johan.ostman@ai.se}

 \author{Balazs Pejo}
 \affiliation{%
   \institution{Budapest University of Technology and Economics}
   \city{Budapest}
   \country{Hungary}}
 \email{pejo@crysys.hu}

\begin{abstract}
Cross-silo federated learning allows multiple organizations to collaboratively train machine learning models without sharing raw data, but client updates can still leak sensitive information through inference attacks. Secure aggregation protects privacy by hiding individual updates, yet it complicates contribution evaluation, which is critical for fair rewards and detecting low-quality or malicious participants. Existing marginal-contribution methods, such as the Shapley value, are incompatible with secure aggregation, and practical alternatives, such as Leave-One-Out, are crude and rely on self-evaluation.

We introduce two marginal-difference contribution scores compatible with secure aggregation. Fair-Private satisfies standard fairness axioms, while Everybody-Else eliminates self-evaluation and provides resistance to manipulation, addressing a largely overlooked vulnerability. We provide theoretical guarantees for fairness, privacy, robustness, and computational efficiency, and evaluate our methods on multiple medical image datasets and CIFAR10 in cross-silo settings. Our scores consistently outperform existing baselines, better approximate Shapley-induced client rankings, and improve downstream model performance as well as misbehavior detection.
These results demonstrate that fairness, privacy, robustness, and practical utility can be achieved jointly in federated contribution evaluation, offering a principled solution for real-world cross-silo deployments.

\end{abstract}

% CCS Concepts
\begin{CCSXML}
<ccs2012>
   <concept>
       <concept_id>10010147.10010257.10010293</concept_id>
       <concept_desc>Computing methodologies~Machine learning~Distributed algorithms</concept_desc>
       <concept_significance>500</concept_significance>
   </concept>
   <concept>
       <concept_id>10002978.10003029.10003031</concept_id>
       <concept_desc>Security and privacy~Privacy-preserving protocols</concept_desc>
       <concept_significance>500</concept_significance>
   </concept>
   <concept>
       <concept_id>10010583.10010588.10010592</concept_id>
       <concept_desc>Computer systems organization~Distributed architectures</concept_desc>
       <concept_significance>500</concept_significance>
   </concept>
</ccs2012>
\end{CCSXML}

\ccsdesc[500]{Computing methodologies~Machine learning~Distributed algorithms}
\ccsdesc[500]{Security and privacy~Privacy-preserving protocols}
\ccsdesc[500]{Computer systems organization~Distributed architectures}

\keywords{Federated Learning, Secure Aggregation, Contribution Evaluation, Shapley Value, Fairness, Robustness, Cross-Silo Learning}

%\received{1 February 2026}
%\received[revised]{1 May 2026}
%\received[accepted]{1 June 2026}

\maketitle

\section{Introduction}
\label{sec:intro}

Federated learning (FL)~\cite{kairouz2021advances} is a distributed machine learning approach in which multiple parties jointly train a machine learning model without exchanging their raw data. Instead of centralizing the training data, FL delegates computation to the clients and communicates only model updates (e.g., gradients) with a central server. This decentralized approach enhances privacy by design, as sensitive data never leaves the clients.

In addition to privacy preservation, contribution evaluation (CE)~\cite{siomos2023contribution} is a critical component of FL. Fairly assessing client contributions is essential for incentivizing honest participation, ensuring equitable reward distribution, and detecting under-performing or malicious actors. CE schemes are specifically designed to assign a \emph{score} to each participant that reflects their importance toward the collaborative process.

While FL provides built-in privacy protection by keeping raw data local, individual model updates can still leak sensitive information through inference attacks~\cite{Leak}. To mitigate this risk, additional privacy-preserving mechanisms such as differential privacy and secure aggregation have been introduced. In particular, secure aggregation (SA)~\cite{zhou2022survey} has emerged as a standard technique for concealing individual client updates, making it harder to perform inference attacks. Based on lightweight secure multiparty computation~\cite{cramer2015secure} and secret sharing~\cite{Shamir79}, secure aggregation ensures that individual client updates remain hidden from the server while still enabling correct computation of the aggregated global model.
SA therefore poses a fundamental tension with CE, as most prior methods depend on individual client updates~\cite{liu2021gtg,ijcai2022p782} and are incompatible with the SA mechanism.

One of the most widely studied CE methods is the Shapley value ($\SV$)~\cite{shapley1953value}, a marginal-contribution-based metric rooted in cooperative game theory. It is uniquely characterized by a set of well-established fairness properties, making it a principled foundation for fair reward allocation.  However, despite its theoretical appeal, the $\SV$ is computationally intractable for more than a few clients, as it requires training exponentially many models. Moreover, its reliance on marginal contributions makes it fundamentally incompatible with privacy-enhancing mechanisms, such as SA. Well-known approximations inherit this incompatibility. 

The Leave-One-Out mechanism ($\LOO$)~\cite{black2021leave,evgeniou2004leave} remains the only practical marginal difference based CE method compatible with SA. However, this approach is severely limited, yielding a crude approximation of the $\SV$. Furthermore, it satisfies only a subset of the desirable fairness properties and requires self evaluation when SA is applied, making it unreliable in the presence of selfish clients who seek to manipulate their own scores.

\paragraph{Our contribution.}
In this work, we address the fundamental problem of designing CE methods for cross-silo FL that simultaneously achieve fairness (by satisfying the desired properties),  privacy (through compatibility with SA), and robustness against selfish clients. Our contributions are the following.

\begin{itemize}
    \item We propose two novel CE scores that are compatible with SA: Fair-Private ($\FP$) and Everybody-Else ($\EE$). $\FP$ satisfies all fairness properties, but still relies on self evaluation. On the contrary, $\EE$ deliberately relaxes the null player property to provide robustness against selfish clients.
      
    \item We establish theoretical guarantees for the proposed CE methods, covering fairness, privacy, computational complexity, and resistance to manipulation.
              
    \item We empirically evaluate the proposed scores on three medical image datasets and CIFAR10. Our methods consistently outperform $\LOO$ and better approximate the client contributions induced by the ground truth $\SV$.    
\end{itemize}

\section{Related work} 
\label{sec:rw}

The privacy-utility trade-off in CE is mostly studied in the literature from two angles. The first examines how existing CE methods amplify privacy leakage during learning, while the second focuses on designing CE schemes that yield meaningful and reliable scores while preserving privacy. The former has received considerably more attention. A recent survey~\cite{lin2024comprehensive} highlights that most existing work at the intersection of CE and privacy focuses on how different CE metrics may increase the risk of inference attacks. In contrast, the latter problem, how to construct CE methods that are intrinsically privacy-preserving, remains mostly neglected.

Only a few works have explicitly addressed privacy-preserving CE. The authors of~\cite{zheng2022secure} proposed a multi-server solution based on encryption, which incurs significant computational overhead. Another paper~\cite{watson2022differentially} used Differential Privacy by injecting noise to client updates before evaluation, which severely degrades utility. Finally, in~\cite{ma2021transparent}, the researchers proposed a blockchain-based framework that requires a drastic architectural change. Critically, none of these approaches considers SA, despite its widespread adoption in practical FL systems. 

To the best of our knowledge, only three prior works have addressed CE under SA: Federated Group Testing (FedGT)~\cite{xhemrishi2023fedgt}, Quality Inference (QI)~\cite{pejo2023quality}, and their combination~\cite{xhemrishi2025detect}. However, these operate under different threat models. FedGT relies on an assignment matrix that exposes specific sub-coalitions, thereby reducing the anonymity set of the SA protocol. QI relies on the probabilistic randomness of client selection in partial-participation (Cross-Device) settings. In contrast, our work addresses strict Cross-Silo FL with full participation and maximal anonymity, where these methods are not directly applicable. Moreover, unlike our work, these methods lack built-in fairness properties.

\section{Preliminaries}
\label{sec:pre}

\paragraph{Notation.} Our notation is summarized in Table~\ref{tab:not}. We denote by $M_0$ and $M$ the global model before and after a training round, respectively, and by $M_i$  the local model of client $i\in[N]=\{1,\dots,N\}$. The local update of client $i$ is defined as the pseudogradient $U_i=\frac{1}{N}(M_i-M_0)$. We assume that aggregation is performed by simple addition, i.e.,  $M=M_0+\sum_{i=1}^NU_i$. This setting encompasses the standard aggregation technique \emph{Federated Averaging (FedAvg)}~\cite{kairouz2021advances}. We assume SA~\cite{bonawitz2017practical} is employed, i.e., the individual client updates ($U_i$) are hidden from the server. Yet, it can still compute the aggregated global model ($M$), preserving both model functionality and client privacy.

\begin{table}[!t]
 \caption{Notation}
  \label{tab:not}
    \centering
    \renewcommand{\arraystretch}{1.1}
    \begin{tabular}{c|l}
        Symbol & Meaning \\
        \hline
        $N$ & Number of clients \\
        $M_0$ & Initial model before the training round \\
        $M$ & Aggregated model after the training round  \\
        $U_i$ & Client $i$ local update within the round \\
        $v(\cdot)$ & Evaluation function (loss, accuracy, etc.)  \\
    \end{tabular}
\end{table}

\subsection{Contribution Evaluation}
\label{sec:CE}

To establish an objective reference point for contribution assessment, most CE schemes employ an external dataset to evaluate the aggregated models. This approach is based on the premise that the value of a dataset correlates with the performance of a model trained on it, thereby creating an intrinsic coupling between data valuation and validation~\cite{song2019profit,liu2021gtg}. Other methods estimate contributions by measuring similarity between the global model and individual client models, e.g., using Cosine Similarity ($\COS$)~\cite{xu2021gradient}. However, this approach suffers from fundamental limitations: $\COS$ captures only directional alignment while ignoring differences in magnitude, feature importance, and decision boundaries~\cite{Cosine_pitfalls}. Consequently, two models may appear highly aligned under this metric while exhibiting drastically different behaviors and generalization performance. Moreover, using the aggregated global model as the baseline introduces additional bias, as it may incorporate malicious or low-quality client updates, rendering it an unreliable anchor~\cite{collabotative}.

In this work, in line with the marginal difference based CE literature~\cite{ghorbani2019data, song2019profit}, we assume the availability of an external representative test dataset that reflects the overall population. Such a dataset could be constructed by a Generative Adversarial Network (GAN) generating a representative synthetic local test set~\cite{con_gan}. What is more, such a process can be infused with Differential Privacy~\cite{pejo2022guide}, making it a suitable approach for our privacy-preserving setting~\cite{privacy_gan_one,privacy_gan_two}. For instance, the synthetic local test sets can then be sent to the server to be combined to create a global test set~\cite{gan_sv}.

\paragraph{Shapley value.}
In FL, the most widely used CE methods relying on a test set are approximations of the \textit{Shapley value} ($\SV$)~\cite{rozemberczki2022shapley}, a concept rooted in cooperative game theory. The $\SV$ of a client is defined as the weighted sum of its marginal contributions to all possible coalitions $S\subseteq[N]$. It relies on the utility function $v(\cdot)$, which assigns a value to each coalition of clients. In FL, $v$ typically measures model performance (e.g., accuracy or loss).

\begin{definition}[Shapley value~\cite{shapley1951notes}]
    \label{def:shap}
    The Shapley value of client $i$ for a given utility function $v$ is formalized below.
    \begin{equation}
        \label{eq:shap}
        \SV(i)= \frac1N\sum_{S \subseteq [N] \setminus \{i\}} \binom{N-1}{|S|}\cdot(v(S \cup \{i\}) - v(S))
    \end{equation}
\end{definition}

The $\SV$ is the only mechanism satisfying four fundamental properties (\textit{linearity}, \textit{efficiency}, \textit{null player}, and \textit{symmetry}), widely accepted as collectively characterizing fair reward allocation. The linearity property assumes contributions are additive and independent, an assumption that generally does not hold in FL due to the nonlinear and sequential nature of the client updates. Therefore, linearity has limited practical relevance in FL~\cite{song2019profit}. As such, the primary design goal for a fair CE method is to satisfy the following three properties.

\begin{property}[Efficiency]
    \label{ax:eff}
    The total value generated by the grand coalition must be fully distributed among all players.
\end{property}

\begin{property}[Null player]
    \label{ax:null}
   If a player does not contribute any additional value to any coalition, its score should be zero.
\end{property}

\begin{property}[Symmetry]
    \label{ax:sym}
    If two players contribute equally to every possible coalition, they should receive the same score. 
\end{property}

Exact computation of the $\SV$ requires training separate models for every possible subset of clients $S \subseteq [N]$, a process referred to as the \textit{retraining game}. This approach is computationally infeasible, as it requires training an exponential number of models. To address this limitation, an approximation method has been proposed~\cite{song2019profit} that leverages per-round gradient updates during training a single global model. In each round, the method evaluates all possible subsets of gradient updates to compute client scores, which are then averaged across rounds.  We refer to this approximation as the multi-round Shapley value ($\MRSV$). 

A major drawback of these approaches is their incompatibility with SA, as they require access to individual client updates. Under SA, only a limited set of submodels is available for evaluation at each round. The server has access only to $M_0$ and $M$, while client $i$ can compute two additional coalitions: $M_0+U_i$ and $M-U_i$. Consequently,  only clients can compute marginal difference based CE scores using this restricted set of submodels. One approach to approximate the $\SV$ based on these submodels is \textit{Leave-One-Out} ($\LOO$)~\cite{black2021leave,evgeniou2004leave}, which evaluates a client's contribution by comparing the performance of the model of the grand coalition $S=[N]$ against that of the model where the local update of the client is omitted, $S'=[N]\setminus\{i\}$. 

\begin{definition}[Leave-One-Out]
    \label{def:L1O}
    The Leave-One-Out score of client $i$ is defined as $\LOO(i)=v(M) - v(M-U_i)$.
\end{definition}

Importantly, only client $i$ can evaluate both $v(M_0+U_i)$ and $v(M-U_i)$, since its update $U_i$ is hidden from other clients and the server. This requires clients to have access to the global test dataset, which is shared by the server, to perform these evaluations. More problematically, it turns $\LOO$ into a \textit{self evaluation} technique where each client computes and reports its own score to the server, thus becoming sensitive to manipulation by selfish clients. 

\subsection{Threat Model and Goal}
\label{sec:threatmodel}

\begin{table*}[!b]
    \caption{Threat model summary}
       \centering
    \label{tab:threatmodel}
    \begin{tabular}{p{0.1\textwidth}|p{0.27\textwidth}|p{0.3\textwidth}|p{0.2\textwidth}}
        \textbf{Entity} & \textbf{Assumptions} & \textbf{Capabilities} & \textbf{Goals} \\
        \hline\hline
        Server & Honest-but-curious; follows protocol; not malicious & Receives masked local updates; aggregates local update; aggregates local scores  & Coordinate training; learn aggregated model \\
        \hline
        Client (training) & Honest-but-curious; follows training protocol; uses true local data & Holds local dataset; sends masked local update to server & Achieve best possible global model \\
        \hline
        Client (post training) & Selfish-but-non-malicious; does not disrupt training & Knows local and aggregated update; estimates  marginal differences & Maximize own contribution score or minimize others’ \\
    \end{tabular}
\end{table*}

We study cross-silo federated learning with secure aggregation. Strict compatibility is essential for preventing leakage of sensitive training data via white-box attacks such as Deep Leakage from Gradients (DLG)~\cite{zhu2019deep}. In regulated environments (e.g., GDPR), exposing individual gradients is often legally prohibited. While some attacks target aggregated gradients, SA ensures that any potential leakage cannot be traced to specific participants, making it a critical Privacy Enhancing Technology (PET) for cross-silo consortia~\cite{heyndrickx2023melloddy}.

Our trust assumptions are summarized in Table~\ref{tab:threatmodel}. The server is \emph{honest-but-curious}: it executes the training protocol faithfully without malicious behavior, but may attempt to learn additional information from aggregated values. All clients are assumed to be \emph{selfish}, prioritizing their own utility and, thus, reward maximization. For utility maximization, the best known strategy is honest training on local data to collectively obtain the best possible global model. Therefore, clients adhere to the prescribed update protocol. However, contribution evaluation, which takes place \emph{after} each training round as a post-processing step, introduces different incentives. Clients are motivated to maximize their contribution scores either directly through inflated self reporting or indirectly by undermining others' scores.  We exclude malicious behavior such as Sybil attacks, treating clients as independent, selfish actors without collusion capabilities. These assumptions are in line with other works, such as~\cite{anada2025measuring}. Given this, we aim to design CE mechanisms that preserve privacy by ensuring compatibility with SA and prevent manipulation of contribution reports.

\section{Private Contribution Evaluation}
\label{sec:mod}

We focus on contribution scoring in a cross-silo setting, specifically evaluating contributions within a single training round, similar to~\cite{song2019profit}.
 Under SA, the only models accessible to all clients are the grand coalition $S=[N]$ and the empty coalition $S=\emptyset$, i.e., $M$ and $M_0$, respectively.  However, client $i$ individually has access to the model corresponding to the singleton coalition $S=\{i\}$, given by $M_0+U_i$. Moreover, by removing its own update from the global model, client $i$ can reconstruct the model corresponding to $S=[N]\setminus\{i\}$, i.e., $M-U_i$.

Building on the above, we formalize the additional privacy property that a CE method must have to be compatible with SA, alongside the three desirable fairness properties.

\begin{property}[SA Compatibility]
    \label{ax:priv}
    A CE scheme is compatible with SA if the score assigned to each client depends only on the models corresponding to the coalitions $[N]$, $\emptyset$, $[N]\setminus\{i\}$, and $\{i\}$ $\forall i\in [N]$.
\end{property}

\subsection{Self Evaluation}
\label{sec:FP}

In this section, we introduce a novel CE score, termed \emph{Fair-Private} ($\FP$). The key idea is to leverage the singleton coalition $\{i\}$ available to client $i$ to construct a CE score more expressive than the standard $\LOO$ score.

\begin{definition}[Fair-Private]
    \label{def:FP2}
    The Fair-Private score of client $i$ is defined in~\eqref{eq:FP} where $\alpha$ is defined in~\eqref{eq:alphai}. 
    
    \begin{equation}
        \label{eq:FP}
        \FP(i)=\left(\cfrac{ \alpha(i)}{\sum_{j=1}^{N}{ \alpha(j)}}\right)\cdot v(M)
    \end{equation}

    \begin{equation}
    \label{eq:alphai}
    \alpha(i)=\frac{[v(M) - v(M-U_i)]+[v(M_0+U_i)-v(M_0)]}{2}
    \end{equation}
\end{definition} 

The first term in~\eqref{eq:alphai} is the Leave-One-Out score, while the second is its symmetric counterpart `Include-One-In' ($\IOI$): instead of removing $i$ from the grand coalition, we add it to the empty set. The combination of these marginal differences mirrors the $\SV$, where coalitions of complementary sizes receive identical weights.  Our construction of the $\FP$ score further draws inspiration from the normalized Banzhaf power index~\cite{banzhaf}. By applying the same type of normalization in~\eqref{eq:FP}, we can enforce the efficiency property (Property~\ref{ax:eff}) while simultaneously preserving the other fairness and privacy properties, as established in the following theorem. 

\begin{theorem}[Properties of $\FP$]
    \label{th:L1O-I1I_ax}
     $\FP$ satisfies Properties~\ref{ax:eff},~\ref{ax:null},~\ref{ax:sym}, and~\ref{ax:priv}.
\end{theorem}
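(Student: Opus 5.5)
The plan is to verify the four properties one at a time, directly from Definition~\ref{def:FP2}. Throughout, write the coalition values in the paper's notation: $v(\emptyset)=v(M_0)$, $v([N])=v(M)$, $v([N]\setminus\{i\})=v(M-U_i)$ and $v(\{i\})=v(M_0+U_i)$. We also assume that the normalizer $\sum_{j}\alpha(j)$ is nonzero, which is implicit for~\eqref{eq:FP} to be well defined. In the degenerate case where all $\alpha(j)=0$, we adopt the convention that the scores are all zero, or we simply exclude that case.

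\textbf{SA compatibility and efficiency.} For Property~\ref{ax:priv}, observe that $\alpha(i)$ in~\eqref{eq:alphai} involves only the four coalitions $[N]$, $\emptyset$, $[N]\setminus\{i\}$ and $\{i\}$. Client $i$ can compute it locally, since it knows $M_0$, $M$ and $U_i$. The normalizer $\sum_j\alpha(j)$ and the factor $v(M)$ are built only from such quantities. Each client reports its $\alpha(j)$, and these can be summed by the honest-but-curious server, possibly itself under SA. So $\FP(i)$ depends only on the permitted models. For Property~\ref{ax:eff}, summing~\eqref{eq:FP} over $i$ gives
\[
\sum_i \FP(i) = \frac{\sum_i\alpha(i)}{\sum_j\alpha(j)}\, v(M) = v(M).
\]
This is the value of the grand coalition, so the total value is fully distributed. (If one prefers to count value relative to $v(M_0)$, the same argument with $v(M)-v(M_0)$ works; I would state whichever normalization the paper intends.)

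\textbf{Null player and symmetry.} For Property~\ref{ax:null}: if $i$ adds no value to any coalition, then in particular $v([N])=v([N]\setminus\{i\})$ and $v(\{i\})=v(\emptyset)$. Both brackets in~\eqref{eq:alphai} therefore vanish, so $\alpha(i)=0$ and $\FP(i)=0$. For Property~\ref{ax:sym}: if $i$ and $j$ contribute equally to every coalition, then $v(S\cup\{i\})=v(S\cup\{j\})$ for every $S$ not containing $i$ or $j$. Taking $S=\emptyset$ gives $v(\{i\})=v(\{j\})$. Taking $S=[N]\setminus\{i,j\}$ gives $v([N]\setminus\{j\})=v([N]\setminus\{i\})$. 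Hence $\alpha(i)=\alpha(j)$, and since the normalizer and $v(M)$ are common to all clients, $\FP(i)=\FP(j)$.

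\textbf{Main obstacle.} The only delicate point is formal rather than computational. The properties are phrased for a cooperative game, while the scores are defined through models. One has to make explicit that the coalition values used here are exactly $v$ evaluated on the corresponding sub-aggregates $M_0+\sum_{k\in S}U_k$. Once that identification is fixed, together with the nonzero-normalizer caveat, each property reduces to the one-line check above.
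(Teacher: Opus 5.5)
Your proof is correct and follows the same direct, property-by-property verification from the definition as the paper (Appendix~\ref{sec:appTh1}); your symmetry step is slightly more explicit, deriving $v(\{i\})=v(\{j\})$ and $v([N]\setminus\{i\})=v([N]\setminus\{j\})$ by taking $S=\emptyset$ and $S=[N]\setminus\{i,j\}$. One small caution: setting all scores to zero when the normalizer vanishes would violate efficiency whenever $v(M)\neq 0$, and the normalizer can also vanish through cancellation without every $\alpha(j)$ being zero, so excluding that case, as the paper does in Remark~\ref{rem:edgecase}, is the safer choice.
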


\begin{proof}
    The proof is given in Appendix~\ref{sec:appTh1}.
\end{proof}

$\FP$ can be formalized within the family of semi-values~\cite{with_effic}. While $\FP$ mathematically resembles an arithmetic mean of the two semi-values ($\LOO$ and $\IOI$), its derivation is strictly constraint-driven. Under SA, the space of computable semi-values effectively collapses due to the limited observable marginals. Thus, the term $\alpha(i)$ (see Equation~\ref{eq:FP}) represents the most expressive semi-value constructible using all available information without violating the privacy constraint. It is also important to note that while $\LOO$ (and $\IOI$) satisfies the Linearity axiom, $\FP$ loses that due to the rescaling. In fact, it is impossible to compute a score that is simultaneously Linear, Efficient, and SA-compatible. With rescaling, we prioritize Efficiency, which is practically indispensable for incentive mechanisms in fixed budget federated environments. We refer the reader to Appendix~\ref{app:semivalues} where we further elaborate on these aspects. 

\subsection{Evaluating Others}
\label{sec:EE}

Note that all previous schemes (e.g., $\LOO$, $\FP$) directly used $U_i$ to assign a score to client $i$. Thus, those scores can only be computed by the respective client. Such self evaluation scores have an inherent drawback: a selfish client can misreport or manipulate the performance of its model, thereby gaining an unfair advantage and undermining the integrity of the CE process. To overcome this limitation, we introduce an alternative CE score termed \emph{Everybody-Else} ($\EE$) that, in addition to being compatible with SA, does not rely on self evaluation, thereby providing resistance to direct manipulation.

\begin{property}[Manipulation Resistance]
    \label{ax:inc}
    A CE method is manipulation-resistant if no client can directly manipulate their own score.
\end{property}

To satisfy Property~\ref{ax:inc}, we introduce the idea of having clients evaluate each other, rather than themselves. This ensures that no client can directly influence its own score.

\begin{definition}[Everybody-Else]\label{def:EE}
   The Everybody-Else score is defined in~\eqref{eq:EE} where $\beta$ and $\gamma$ are defined in~\eqref{eq:LEEO-IEEI}.
   
   \begin{equation}
        \label{eq:EE}
        \EE(i)=\left(\frac{ \frac{\beta(i)+\gamma(i)}{2}}{\sum_{j=1}^{N}{ \frac{\beta(i)+\gamma(i)}{2}}}\right)\cdot v(M)
    \end{equation}
    
    {\footnotesize\begin{equation}
    \label{eq:LEEO-IEEI}
        \beta(i)=\sum\limits_{j\not=i}\frac{[v(M)-v(M_0+U_j)]}{(N-1)^2}
        \hspace{0.1cm};\hspace{0.1cm}
        \gamma(i)=\sum\limits_{j\not=i}\frac{[v(M-U_j)-v(M_0)]}{(N-1)^2}
    \end{equation}}
\end{definition}

Equation~\eqref{eq:EE} contains the same rescaling as before, while~\eqref{eq:LEEO-IEEI} formalizes our idea of having clients evaluate each other. The score of client $i$ is determined collectively by the remaining clients $j\in[N]\setminus\{i\}$. Direct computation of client $i$'s score via marginal difference based evaluation is infeasible, as no other client $j$ can compare two coalitions that differ solely in the presence or absence of client $i$. Instead, our key insight is that the contribution of client $i$ can be inferred indirectly by assessing the collective contribution of the other clients. Specifically, each client $j\not=i$  computes marginal differences involving itself. $\beta(i)$ measures the effect of comparing the grand coalition $[N]$ against $\{j\}$, while $\gamma(i)$ measures the effect of comparing $[N]\setminus\{j\}$ against $\emptyset$. Aggregating these evaluations across all $j \neq i$, we obtain an estimate of client $i$'s score without requiring self evaluation.

Across all $j \neq i$, client $i$ is the only participant that appears in every coalition considered in the summations for $\beta(i)$ and $\gamma(i)$, while the other clients appear less frequently. Since client $i$ is systematically present, the aggregation of these marginal differences over $j \neq i$ yields a quantity that predominantly captures the contribution of client $i$. Hence, $\EE(i)$ can be interpreted as an indirect estimate of client $i$'s contribution, even though it is constructed solely from the evaluations of all other clients.

In line with the principles of the $\SV$, each marginal difference in the summations of $\beta(i)$ and $ \gamma(i)$ is weighted equally, since all of them are computed over coalitions of the same size. The normalization by $(N-1)^2$ serves two purposes. First, the summation aggregates contributions from $N-1$ clients (excluding client $i$) and dividing by $N-1$ yields their average. Second, each term corresponds to a coalition of $N-1$ clients (excluding client $j$), and we distribute its value uniformly across the remaining $N-1$ clients. This normalization ensures that the resulting scores are on the same scale as the $\SV$, as confirmed in our experiments.  

For clarity, we next illustrate the definition with an example, followed by our theoretical results. 

\begin{example}[Three clients]
    Consider a scenario with three clients: $A$, $B$, and $C$. The score of client $A$ is determined from the evaluations provided by clients  $B$ and $C$. Specifically, client $B$ computes $\gamma(A)$ as the marginal difference between the coalition $\{A,C\}$ and the empty set $\emptyset$, and computes $\beta(A)$ by comparing $\{B\}$ with the grand coalition $\{A,B,C\}$. In this way, client $B$ effectively evaluates the set $\{A,C\}$. Similarly, client $C$ performs the same type of evaluations to measure the contribution of $\{A,B\}$. Combining these results in $\EE(A)$ produces a score that reflects both $\{A,C\}$ and $\{A,B\}$. In these coalitions, all clients are included at least once, but only client A appears in every coalition. Because client A is systematically present, aggregating the differences across clients B and C yields a quantity that primarily reflects client A's contribution.
    
\end{example}

\begin{theorem}[Properties of $\EE$]
    \label{th:LEEOIEEI}
    $\EE$ satisfies Properties~\ref{ax:eff},~\ref{ax:sym},~\ref{ax:priv}, and~\ref{ax:inc}.
\end{theorem}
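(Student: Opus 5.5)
We check the four properties one at a time, each directly from Definition~\ref{def:EE}. The normalizing denominator in~\eqref{eq:EE} must be read as $\sum_{j=1}^N \frac{\beta(j)+\gamma(j)}{2}$, i.e.\ the summand uses the summation index $j$; with the index $i$ as literally written, the scores would not add up to $v(M)$. We assume this sum is nonzero, as is implicitly assumed for $\FP$ in Theorem~\ref{th:L1O-I1I_ax}.

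\textbf{Efficiency (Property~\ref{ax:eff}).} Summing~\eqref{eq:EE} over $i\in[N]$ gives $\sum_i \EE(i) = \frac{\sum_i (\beta(i)+\gamma(i))/2}{\sum_j (\beta(j)+\gamma(j))/2}\, v(M) = v(M)$. So the value of the grand coalition is fully distributed, exactly as for $\FP$.

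\textbf{SA compatibility (Property~\ref{ax:priv}).} Every term in~\eqref{eq:LEEO-IEEI} involves only $v(M)$, $v(M_0)$, $v(M_0+U_j)$ and $v(M-U_j)$. These are the models of the coalitions $[N]$, $\emptyset$, $\{j\}$ and $[N]\setminus\{j\}$, which Property~\ref{ax:priv} allows. Concretely, each client $j$ computes the two scalars $v(M)-v(M_0+U_j)$ and $v(M-U_j)-v(M_0)$ locally and reports only these. The server then forms $\beta(i)$ and $\gamma(i)$ as sums over $j\neq i$, so no individual update $U_j$ is ever revealed.

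\textbf{Manipulation resistance (Property~\ref{ax:inc}).} Client $i$'s own report $r_i := [v(M)-v(M_0+U_i)]+[v(M-U_i)-v(M_0)]$ never enters $\beta(i)$ or $\gamma(i)$, because those sums run over $j\neq i$. The numerator of $\EE(i)$ is therefore independent of anything client $i$ reports. The global model $M$ is fixed before evaluation, and clients follow the training protocol under the threat model. So client $i$ has no direct handle on its own numerator.

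This step is the main obstacle, because $r_i$ does enter the normalizer. The cleanest way to see this is to write the sum of the numerators as
\[
\sum_{i}\frac{\beta(i)+\gamma(i)}{2} = \frac{1}{2(N-1)^2}\sum_i\sum_{j\neq i} r_j = \frac{1}{2(N-1)}\sum_j r_j .
\]
The normalizer thus depends on all reports, including $r_i$. I would argue that this is at most an indirect effect, and that it is shared. Changing $r_i$ rescales every client's score by the same factor, and it also changes every other client's numerator by the same amount. Hence client $i$ cannot directly set its own score, which is all Property~\ref{ax:inc} requires. Indirect influence of this kind is excluded from the property and discussed separately.

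\textbf{Symmetry (Property~\ref{ax:sym}).} Suppose clients $a$ and $b$ contribute equally to every coalition, i.e.\ $v(S\cup\{a\})=v(S\cup\{b\})$ for all $S\subseteq[N]\setminus\{a,b\}$. Then:
\begin{itemize}
\item taking $S=\emptyset$ gives $v(M_0+U_a)=v(M_0+U_b)$;
\item taking $S=[N]\setminus\{a,b\}$ gives $v(M-U_b)=v(M-U_a)$.
\end{itemize}
Consequently $r_a=r_b$. Now
\[
\beta(a)+\gamma(a)=\frac{1}{(N-1)^2}\sum_{j\neq a} r_j=\frac{1}{(N-1)^2}\Big(\sum_{j} r_j - r_a\Big),
\]
and the same formula holds for $b$ with $r_b$ in place of $r_a$. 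Since $r_a=r_b$, the two numerators coincide. The denominator is common to all clients, so $\EE(a)=\EE(b)$.
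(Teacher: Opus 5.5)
Your proof is correct and follows the paper's argument essentially step by step. Efficiency comes from the common normalizer; SA compatibility holds because only $v(M)$, $v(M_0)$, $v(M_0+U_j)$, $v(M-U_j)$ appear; manipulation resistance holds because client $i$'s own values never enter $\beta(i)$ or $\gamma(i)$; and symmetry follows from $v(M_0+U_a)=v(M_0+U_b)$ and $v(M-U_a)=v(M-U_b)$. Your reading of the denominator with index $j$ is the intended one, as the value $\EE(\tA)=3/4$ in Example~\ref{ex:null} confirms.

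Your remark on the normalizer goes beyond the paper, which only inspects $\beta(i)$ and $\gamma(i)$. In fact $\EE(i)=\frac{v(M)}{N-1}\bigl(1-r_i/\sum_j r_j\bigr)$, so client $i$'s own report still moves its final normalized score. Property~\ref{ax:inc} therefore holds only in the ``direct'' (numerator-level) sense that both you and the paper adopt.
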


\begin{proof}
    The proof is given in Appendix~\ref{sec:appTh2}.
\end{proof}

Note that on the contrary, $\EE$ does not satisfy the null player axiom (Property~\ref{ax:null}), as illustrated in the following example. 

\begin{example}\label{ex:null}
    Consider a scenario with three players where the coalition values are given by  
    $v(\emptyset)=0$, $v(\{\tA\})=0$, $v(\{\tB\})=1$, $v(\{\tC\})=2$, $v(\{\tA,\tB\})=1$, $v(\{\tA,\tC\})=2$, $v(\{\tB,\tC\})=3$, and $v(\{\tA,\tB,\tC\})=3$. 
    Here, client A is clearly a null player, as adding A to any coalition does not increase its value. 
    However, the scores computed by $\EE$ are positive, namely $\EE(\tA)=3/4$, demonstrating that $\EE$ does not satisfy Property~\ref{ax:null}.
\end{example}

In a sense, the EE score sacrifices the Null Player property to achieve manipulation resistance. We argue this is a small trade-off in the context of FL. In cooperative game theory, a `null player' contributes zero value. However, in FL, `free-riders' are strategic actors who can easily inject noise or random weights rather than submitting detectable zero gradients. EE prioritizes robustness (Property 5) over identifying passive contributors, which is more practical for mitigating active selfish behavior in FL environments.

\subsection{Overview}
\label{sec:over}

To facilitate understanding of our proposed schemes, their properties, and their interconnections, we provide a summary in Table~\ref{tab:coalitions}. The table not only indicates which scheme satisfies each axiom but also highlights the underlying building blocks on which each scheme relies. We then analyze the computational complexity of each scoring method.

{\small
 \begin{table}[!b]
  \caption{The building blocks for the scores and the properties they satisfy, e.g., $\LOO(i) = v(M) - v(M \setminus U_i)$, which depends only on $M$ and $U_i$.}
     \label{tab:coalitions}
     \centering
     \begin{tabular}{c|ccc|ccc|ccccc}
         Cont. & \multirow{2}{*}{$M_0$} & \multirow{2}{*}{$+U_i$} & \multirow{2}{*}{$+U_{j}$} & \multirow{2}{*}{$-U_{j}$} & \multirow{2}{*}{$-U_i$} & \multirow{2}{*}{$M$} & \multicolumn{5}{c}{Property} \\
         Eval.&&&&&&&1&2&3&4&5\\
         \hline
          $\LOO(i)$ & & & & & $\medbullet$ & $\medbullet$ &&$\checkmark$&$\checkmark$&$\checkmark$&\\
          $\FP(i)$ & $\medbullet$ & $\medbullet$ & & & $\medbullet$ & $\medbullet$ &$\checkmark$&$\checkmark$&$\checkmark$&$\checkmark$&\\
          $\EE(i)$ & $\medbullet$ &  & $\medbullet$ & $\medbullet$ &  & $\medbullet$ &$\checkmark$&&$\checkmark$&$\checkmark$&$\checkmark$\\ 
     \end{tabular}
 \end{table}}

\begin{theorem}[Complexity]
    \label{th:complex}
    The computational complexity of $\FP$ and $\EE$ is $\mathcal{O}(N)$.
\end{theorem}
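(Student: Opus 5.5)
We measure complexity by the number of model evaluations $v(\cdot)$ required per training round, since each evaluation (a forward pass of a model over the test set) dominates every other operation. Forming a submodel such as $M_0+U_i$ or $M-U_i$ is a single vector addition, and the remaining arithmetic on scalar scores is negligible. The plan is to count the distinct coalitions whose value enters the scores, show that there are $\mathcal{O}(N)$ of them, and then show that the final aggregation and normalization add only $\mathcal{O}(N)$ scalar operations.

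For $\FP$, by Property~\ref{ax:priv} the only values needed are $v(M)$ and $v(M_0)$, computed once and shared, together with $v(M-U_i)$ and $v(M_0+U_i)$ for each client $i$. That gives $2N+2$ evaluations in total. From these, each $\alpha(i)$ in~\eqref{eq:alphai} takes constant time. The denominator $\sum_j \alpha(j)$ is computed once in $\mathcal{O}(N)$, and dividing and multiplying by $v(M)$ is $\mathcal{O}(1)$ per client, so the total is $\mathcal{O}(N)$.

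For $\EE$, a naive reading of~\eqref{eq:LEEO-IEEI} suggests $\mathcal{O}(N^2)$ work, because each of the $N$ scores sums over $N-1$ terms. Dealing with this is the only real step. The fix is to note that the summands $v(M)-v(M_0+U_j)$ and $v(M-U_j)-v(M_0)$ depend only on $j$, not on $i$. Each client $j$ therefore evaluates its two submodels once, giving $2N+2$ evaluations overall. We then compute the totals $B=\sum_{j}[v(M)-v(M_0+U_j)]$ and $G=\sum_j[v(M-U_j)-v(M_0)]$ in $\mathcal{O}(N)$. Each score follows by subtracting client $i$'s own term:
\[
\beta(i)=\frac{B-[v(M)-v(M_0+U_i)]}{(N-1)^2},\qquad \gamma(i)=\frac{G-[v(M-U_i)-v(M_0)]}{(N-1)^2},
\]
which costs $\mathcal{O}(1)$ per client. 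As for $\FP$, the normalization in~\eqref{eq:EE} costs $\mathcal{O}(N)$ once.

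These totals can be formed under SA without revealing any individual update $U_i$: the clients' scalar evaluations are simply aggregated, which matches the server's role in Table~\ref{tab:threatmodel}. The argument ends by contrasting this with the $2^N$ coalition evaluations required by $\MRSV$ per round.
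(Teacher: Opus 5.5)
Your proof is correct and follows the paper's argument. $\FP$ needs $\mathcal{O}(1)$ work per client, and for $\EE$ the key observation is that the evaluations $v(M_0+U_j)$ and $v(M-U_j)$ depend only on $j$, so they are shared across all scores $\EE(i)$. Your total-minus-own-term computation of $\beta(i)$ and $\gamma(i)$ is a worthwhile refinement: the paper's proof only shows that the number of model evaluations is linear, and it leaves implicit that the $N$ sums of $N-1$ scalar terms can also be formed in $\mathcal{O}(N)$ operations rather than $\mathcal{O}(N^2)$.
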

\begin{proof}
The proof is given in Appendix~\ref{sec:appTh3}.
\end{proof}

\section{Experiments}
\label{sec:exp}

\paragraph{Setup.} We evaluate our CE schemes in a cross-silo FL scenario using four image datasets: ISIC2019~\cite{ISIC}, PatchChameleon~\cite{Pcam}, Brain-MRI~\cite{brain}, and CIFAR10~\cite{CIFAR10}. We use the ResNet architecture for ISIC2019 and PatchChameleon and a CNN consisting of 4 layers (Conv2d(3,16,3), Conv2d(16, 32, 3), Linear(100352,128), and Linear(128, 4)) for Brain-MRI and CIFAR10 with and without max pooling, respectively. We utilized the Adam optimizer with both learning rate and weight decay of $0.001$. Further details, such as the number of local epochs, are provided in Appendix~\ref{app:Experiments}.

Following prior work, we assume a common test set for CE that contains the same number of samples per class and is available to all clients~\cite{song2019profit,liu2021gtg}. For each dataset, we construct a non-IID baseline by splitting the data across clients according to a Dirichlet distribution with parameter $\mu=0.5$. All results are reported across 10 runs, and we report both mean and variance. 

The estimation of the privacy-preserving scores follows a star topology, in which clients do not communicate directly with each other. Each client computes the marginal differences for its available coalitions and reports them to the server. The server then aggregates these reported differences to obtain the final scores. To support reproducibility, we have open-sourced our implementation~\footnote{\url{https://anonymous.4open.science/r/Fair-privacy-Preserving-Contribution-Scoring--8C5B/README.md}}.

\paragraph{Evaluation metrics.}
 We compute all scores for a single round (10th by default), yet the baseline CE score is the Multi-Round approximation of the Shapley value ($\MRSV$) as described in~\cite[Algorithm 2]{song2019profit}. We also compute the cosine similarity ($\COS$) accumulated over all training rounds.

In CE, the relative ranking of clients is often more relevant, particularly for identifying the most and least valuable contributors. We therefore report the Spearman correlation coefficient $\phi$~\cite{zar2005spearman}, the Kendall rank $\kappa$~\cite{abdi2007kendall}, and the Pearson value $\rho$~\cite{cohen2009pearson}. In our experiments, all three metrics showed consistent results, so we primarily focus on the Spearman coefficient. It measures how closely our schemes' rankings match the ground truth. A value of $\phi=1$ indicates perfect alignment with the baseline ranking, $\phi=-1$ indicates a perfectly inverted ranking, and values near zero suggest no correlation. 

Naturally, the absolute values of the scores are also important. Since CE scores can be arbitrarily scaled and may take negative values, we normalize them in two steps before comparison: (i) subtracting the minimum score to ensure non-negativity, and (ii) dividing by the mean to account for scale differences. In this normalized space, the $L_2$ distance effectively captures the spatial proximity of the scores, which is critical for fair reward allocation. For instance, a score vector of $[0.01, 0.02, 0.97]$ implies drastically different financial or resource incentives compared to $[0.32, 0.33, 0.35]$, even though the two vectors share the same ranking (imagine distributing \$10,000 proportional to those scores).

 \begin{figure}[!t]
    \centering
    \caption{Various CE scores for a scenario with 4 clients ($\mu=0.5$) at the 10th communication round using the ISIC2019 dataset.}
    \label{fig:illustrate}
      \begin{tikzpicture}[scale=0.7] %
\begin{axis}[
    ybar,
    bar width=6pt,
    width=12cm,
    height=9cm,
    ylabel={Scores},
    xlabel={},
    symbolic x coords={$\MRSV$,$\LOO$,$\FP$,$\EE$,$\COS$},
    xtick=data,
    ymin=0,
    ymax=1.7,
    enlarge x limits=0.15,
    tick style={draw=none}, % <--- removes the small dashes
    legend style={
        at={(0.97,0.97)}, % top-right inside
        anchor=north east,
        draw=black,
        fill=white,
        font=\small,
        rounded corners=2pt,
        legend columns=2,
        column sep=8pt
    },
    title={},
    legend image post style={mark=none, draw=none},
    area legend
]

% Hospital A
\addplot+[ybar,fill=colA] coordinates {
    ($\MRSV$,1.42)
    ($\LOO$,0.80)
    ($\FP$,1.19)
    ($\EE$,1.32)
    ($\COS$,0.98)
};

% Hospital B
\addplot+[ybar,fill=colB] coordinates {
    ($\MRSV$,0.80)
    ($\LOO$,1.60)
    ($\FP$,0.95)
    ($\EE$,0.9)
    ($\COS$,1.00)
};

% Hospital C
\addplot+[ybar,fill=colC] coordinates {
    ($\MRSV$,0.60)
    ($\LOO$,0.40)
    ($\FP$,0.80)
    ($\EE$,0.65)
    ($\COS$,1.00)
};

% Hospital D
\addplot+[ybar,fill=colD] coordinates {
    ($\MRSV$,1.18)
    ($\LOO$,1.20)
    ($\FP$,1.05)
    ($\EE$,1.16)
    ($\COS$,1.01)
};

% --- Manual legend with only squares ---
\addlegendimage{only marks, mark=square*, mark options={fill=colA, draw=black}}
\addlegendentry{Hospital A}
\addlegendimage{only marks, mark=square*, mark options={fill=colB, draw=black}}
\addlegendentry{Hospital B}
\addlegendimage{only marks, mark=square*, mark options={fill=colC, draw=black}}
\addlegendentry{Hospital C}
\addlegendimage{only marks, mark=square*, mark options={fill=colD, draw=black}}
\addlegendentry{Hospital D}

\end{axis}
\end{tikzpicture}
\end{figure}
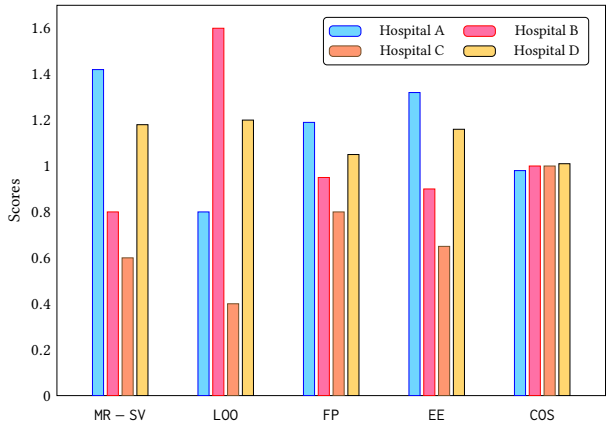

\subsection{Comparative Results}

\paragraph{Comparison to the multi-round Shapley value.}
Figure~\ref{fig:illustrate} compares the proposed privacy-preserving CE scores with the $\MRSV$. Although the absolute values differ across schemes, the relative client rankings remain largely consistent. This demonstrates that our linear-time methods, despite considering far fewer coalitions than $\MRSV$ (that is incompatible with SA), can still effectively capture the ranking of clients. Table~\ref{tab:error} provides a more detailed comparison of the proposed CE scores, using both absolute error ($L_2$) and ranking correlation metrics (Spearman, Kendall, and Pearson). Both $\FP$ and $\EE$ yield rankings that closely align with those obtained with the baseline $\MRSV$, whereas $\LOO$ exhibits a noticeably weaker correlation. 

{\small
\begin{table}[!t]
    \caption{The various metrics of the privacy-preserving marginal difference based contribution scores, along with $\mathtt{COS}$ for a setting with 9 clients ($\mu=0.5$) at the 10th communication round. }
    \label{tab:error}
    \centering
    \begin{tabular}{cc|c||c|c|c}
         &  & $\LOO$     & $\EE$   &$\FP$&$\mathtt{COS}$\\
        \hline
        \multirow{4}{*}{\rotatebox{90}{ISIC}}
         & $L_2$ & $1.057 \pm .432$  & $0.339 \pm .159$  & $0.112\pm.168$ & $2.028\pm .807$\\
        & $\phi$   & $0.821\pm .042$ & $0.942\pm .054$ & $0.933\pm .045$ & $0.002\pm .667$\\
        & $\rho$   & $0.833\pm .051$    & $0.964 \pm .019$  &$0.961 \pm .034$ & $0.349\pm.398$\\
        & $\kappa$ & $0.783\pm .082$   & $0.901 \pm .024$   &$0.912 \pm .022$ & $0.098\pm.235$\\
        \hline 
        \multirow{4}{*}{\rotatebox{90}{Brain}} 
         & $L_2$ & $0.612 \pm .261$     & $0.163 \pm .074$   &$0.032 \pm .021$ & $0.964\pm .363$\\
        & $\phi$   & $0.793\pm .111$  & $0.891 \pm .063$ &$0.891 \pm .063$ & $0.083\pm .065$ \\
        & $\rho$   & $0.831 \pm .022$    & $0.911 \pm .012$  &$0.916 \pm .013$ & $0.211\pm.032$\\
        & $\kappa$ & $0.782 \pm .112$     & $0.871 \pm .113$  &$0.871 \pm .113$ & $0.001\pm.052$\\
        \hline 
        \multirow{4}{*}{\rotatebox{90}{PCam}} 
         & $L_2$ & $0.667 \pm .096$ & $0.422 \pm .093$   & $0.396\pm .053$ & $1.690\pm .206$\\
        & $\phi$   & $0.546\pm .149$     & $0.946 \pm .035$  &$0.944 \pm .024$ & $0.211\pm.376$\\
        & $\rho$   & $0.645\pm .002$  & $0.965 \pm .004$ &$0.968 \pm .002$ & $0.308\pm.024$ \\
        & $\kappa$ & $0.476\pm .133$    & $0.901 \pm .086$  &$0.901 \pm .055$ & $0.148\pm .087$\\
        \hline 
        \multirow{4}{*}{\rotatebox{90}{CIFAR10}} 
         & $L_2$ & $1.607 \pm .533$    & $0.077 \pm .045$   &$0.033 \pm .014$ & $1.234\pm .106$\\
        & $\phi$   & $0.684\pm .203$ & $0.904\pm .042$ &  $0.904\pm .055$ & $0.004\pm .322$ \\
        & $\rho$   & $0.821 \pm .114$    & $0.951 \pm .024$  &$0.955 \pm .022$ & $0.012\pm.303$\\
        & $\kappa$ & $0.691 \pm .082$    & $0.875 \pm .043$   &$0.869 \pm .046$ & $0.013\pm.273$\\ 
    \end{tabular}
\end{table}}

{\small
\begin{table}[!b]
\caption{The $L_2$ error and coefficient $\phi$ of our privacy-preserving marginal difference based scores compared to the true Shapley value for a setting with $9$ clients ($\mu = 0.5$) at the 10th communication round. }
    \label{tab:error_game}
    \centering
    \begin{tabular}{cc|c|c|c|c!}
         &  & $\MRSV$  &$\LOO$ & $\EE$ & $\FP$\\
         \hline
        \multirow{2}{*}{\rotatebox{90}{\scriptsize{ISIC}}} 
         & $L_2$ & $0.024\pm .029$ & $1.384\pm .051$ & $0.649\pm .011$  & $0.459\pm .018$\\
        & $\phi$ & $0.990\pm .002$ & $0.741\pm .004$  & $0.867\pm .062$& $0.861\pm .073$ \\
        \hline 
        \multirow{2}{*}{\rotatebox{90}{\scriptsize{Brain}}} 
         & $L_2$ & $0.191\pm .053$ & $1.660\pm 0.410$ & $0.775\pm .070$  &$0.205\pm .023$\\
        & $\phi$ & $0.982\pm .012$ & $0.517\pm .142$  & $0.971\pm .023$& $0.975\pm .023$\\
        \hline 
        \multirow{2}{*}{\rotatebox{90}{\scriptsize{PCam}}} 
        & $L_2$ & $0.161\pm .028$ & $1.843\pm .079$ & $0.308\pm .049$  &$0.304\pm .061$\\
        & $\phi$ & $0.966\pm .038$ & $0.749\pm .051$  & $0.837\pm .013$& $0.837\pm .013$\\
        \hline 
        \multirow{2}{*}{\rotatebox{90}{\tiny{CIFAR10}}} 
         & $L_2$ & $0.025\pm .006$ & $1.850\pm .387$ & $0.352\pm .053$ &$0.137\pm .029$\\
        & $\phi$ & $0.996\pm .004$ & $0.483\pm .319$  & $0.841\pm .035$ & $0.831\pm .045$\\
         \hline 
    \end{tabular}
\end{table}}

\paragraph{Comparison to cosine similarity.}
For completeness, we also include a comparison with an alternative privacy-preserving method based on cosine similarity, a widely used approach that does not rely on marginal differences~\cite{evans2024data,wu2021fast,xu2021gradient}. In this method, each client is assigned a score based on the similarity between its local gradient and the aggregated global gradient, i.e., $\COS(i) = \text{CosSim}(M_0 + U_i, M)$. This method also operates only with the coalitions available with SA. The last column of Table~\ref{tab:error}  presents the performance of the $\COS$ score with respect to $\MRSV$ across all evaluation metrics. Both our proposed scoring schemes consistently and significantly outperform $\COS$.

%\paragraph{Comparison to the true Shapley value.} In Table~\ref{tab:error_game}, we also compare against the true Shapley value, which is based on the retraining game, where each coalition of clients trains a model from scratch using FL, rather than aggregating gradients at each round. The proposed CE scores exhibit a high correlation ($\phi$ close to 1) with the true Shapley value.

\paragraph{Comparison to the true Shapley value.} In Table~\ref{tab:error_game}, we also compare against the true Shapley value, which is based on the retraining game, where each coalition of clients trains a model from scratch using FL, rather than aggregating gradients at each round. The proposed CE scores exhibit a high correlation ($\phi$ close to 1) with the true Shapley value, demonstrating that they effectively capture client contributions across diverse settings, including non-IID distributions and varying client counts. This indicates that our approximations preserve the relative importance of clients even when exact Shapley computation would be infeasible, providing a practical yet principled alternative for real-world deployments.

\subsection{Ablation Study}

%We next analyze the impact of three factors on CE: the training round at which the evaluation is performed, the number of clients, and the distribution of data across clients. Table~\ref{tab:round} reports the correlation of our proposed scores with $\MRSV$  when the evaluation is carried out five rounds earlier or later than in the default setting.  Table~\ref{tab:client} shows the correlation for varying numbers of clients. Finally, Table~\ref{tab:alpha} presents the performance of scores relative to $\LOO$ for different values of $\mu$ and the non-IID Dirichlet partitioning of the data. 

We next analyze the impact of three factors on CE: the training round at which the evaluation is performed, the number of clients, and the distribution of data across clients. Table~\ref{tab:round} reports the correlation of our proposed scores with $\MRSV$ when the evaluation is carried out five rounds earlier or later than in the default setting.  Table~\ref{tab:client} shows the correlation for varying numbers of clients. Finally, Table~\ref{tab:alpha} presents the performance of scores relative to $\LOO$ for different values of $\mu$ and the non-IID Dirichlet partitioning of the data. These results indicate that our scoring schemes remain reliable under realistic variations in training dynamics, client populations, and data heterogeneity, confirming their robustness for practical cross-silo deployments.

{\small
\begin{table}[!t]
      \caption{The Spearman coefficient of $\MRSV$ with $\LOO$, $\EE$,  and $\FP$ with 9 clients ($\mu=0.5$) at different communication rounds: 5, 10, 15.}
    \label{tab:round}
    \begin{tabular}{cc|c|c|c!}
         & & $\LOO$  & $\EE$  &$\FP$\\
         \hline  
         \multirow{3}{*}{\rotatebox{90}{ISIC}} 
          & 5& $0.735\pm .022$ & $0.843\pm .080$  &$0.842\pm .074$\\
         & 10   & $0.821\pm .042$ & $0.942\pm .054$ & $0.933\pm .045$\\
         & 15 & $0.905\pm .006$ & $0.986\pm .001$ &  $0.986\pm .001$\\
          \hline 
         \multirow{3}{*}{\rotatebox{90}{Brain}} 
          & 5 & $0.731 \pm .199$ & $0.881 \pm .088$ &$0.880\pm .093$ \\
         & 10 & $0.793\pm .111$  & $0.891 \pm .063$ &$0.891 \pm .063$\\
         & 15 & $0.900\pm .102$ & $0.986\pm .002$ &  $0.983\pm .009$\\
         \hline 
         \multirow{3}{*}{\rotatebox{90}{PCam}} 
          & 5 & $0.311\pm .285$ & $0.866 \pm .178$ &$0.869\pm .160$ \\
         & 10  & $0.546\pm .149$     & $0.946 \pm .035$  &$0.944 \pm .024$\\
         & 15 & $0.803\pm .011$ & $0.952\pm .004$ &  $0.952\pm .004$\\
         \hline  
         \multirow{3.3}{*}{\rotatebox{90}{CIFAR10}} 
          & 5  & $0.606 \pm .034$  & $0.784 \pm .147$   &$0.785 \pm .137$ \\
         & 10 & $0.684\pm .203$ & $0.904\pm .042$ &  $0.904\pm .055$\\
         & 15 & $0.829\pm .164$ & $0.973\pm .023$ &  $0.973\pm .023$\\
         \hline  
    \end{tabular}
    
    \caption{The Spearman coefficient of $\MRSV$ with $\LOO$,  $\EE$,  and $\FP$ when the data is split into 6, 9, 12, and 15 clients ($\mu=0.5$) at the 10th communication round.}
    \label{tab:client}
    \begin{tabular}{cc|c|c|c!}
         & & $\LOO$  & $\EE$  &$\FP$\\
         \hline  
         \multirow{4}{*}{\rotatebox{90}{ISIC}} 
          & 6& $0.862\pm .001$ & $0.941\pm .015$  &$0.940\pm .013$\\
         & 9 & $0.821\pm .042$ & $0.942\pm .054$ & $0.933\pm .045$\\
         & 12 & $0.733\pm .022$ & $0.882\pm .020$ &  $0.890\pm .011$\\
         & 15 & $0.682\pm .031$ & $0.864\pm .078$ & $0.870\pm .051 $\\
         \hline 
         \multirow{4}{*}{\rotatebox{90}{Brain}} 
           & 6 & $0.932 \pm .012$ & $0.950 \pm .050$ &$0.961\pm .024$ \\
         & 9 & $0.793\pm .111$  & $0.891 \pm .063$ &$0.891 \pm .063$\\
         & 12 & $0.745\pm .083$ & $0.844\pm .103$ &  $0.854\pm .141$\\
         & 15 & $0.612\pm .243$ & $0.841\pm .028$ & $0.841\pm .051 $\\
         \hline 
         \multirow{4}{*}{\rotatebox{90}{PCam}} 
          & 6 & $0.737\pm .005$ & $0.980 \pm .006$ &$0.981\pm .005$ \\
         & 9 & $0.546\pm .149$     & $0.946 \pm .035$  &$0.944 \pm .024$\\
         & 12 & $0.528\pm .032$ & $0.905\pm .077$ &  $0.906\pm .035$\\
         & 15 & $0.512\pm .006$ & $0.866\pm .033$ & $0.866\pm .054 $\\
         \hline 
         \multirow{4}{*}{\rotatebox{90}{CIFAR10}} 
          & 6  & $0.793 \pm .151$  & $0.955 \pm .023$   &$0.948 \pm .021$ \\
         & 9 & $0.684\pm .203$ & $0.904\pm .042$ &  $0.904\pm .055$\\
         & 12 & $0.601\pm .094$ & $0.892 \pm .072$ &$0.892 \pm .072$\\
         & 15 & $0.431\pm .006$ & $0.862\pm .005$ & $0.868\pm .004 $\\
         \hline 
    \end{tabular}
    
    \caption{The Spearman coefficient of $\MRSV$ with  $\EE$, and $\FP$ with 9 clients ($\mu\in\{0.1, 0.5, 1.0)\}$) at the 10th communication round.}
    \label{tab:alpha}
    \begin{tabular}{cc|c|c|c!}
         & & $\LOO$  & $\EE$  &$\FP$\\
         \hline  
         \multirow{3}{*}{\rotatebox{90}{ISIC}} 
         & 0.1& $0.868\pm .013$ & $0.981\pm .004$  &$0.980\pm .006$\\
         & 0.5  & $0.821\pm .042$ & $0.942\pm .054$ & $0.933\pm .045$\\
         & 1.0 & $0.822\pm .022$ & $0.916\pm .032$ &  $0.916\pm .056$\\
         \hline 
         \multirow{3}{*}{\rotatebox{90}{Brain}} 
          & 0.1 & $0.801\pm .032$ & $0.945 \pm .037$ &$0.935 \pm .039$ \\
         & 0.5 & $0.793\pm .111$  & $0.891 \pm .063$ &$0.891 \pm .063$\\
         & 1.0 & $0.677\pm .147$ & $0.817\pm .139$ &  $0.816\pm .123$\\
         \hline  
         \multirow{3}{*}{\rotatebox{90}{PCam}} 
          & 0.1 & $0.757\pm 013$ & $0.980 \pm .023$ &$0.981 \pm .012$ \\
         & 0.5 & $0.546\pm .149$     & $0.946 \pm .035$  &$0.944 \pm .024$\\
         & 1.0 & $0.534\pm .076$ & $0.924\pm .038$ &  $0.931\pm .021$\\
         \hline 
         \multirow{3.3}{*}{\rotatebox{90}{CIFAR10}} 
          & 0.1  & $0.880 \pm .014$  & $0.963 \pm .033$   &$0.965 \pm .033$ \\
         & 0.5 & $0.684\pm .203$ & $0.904\pm .042$ &  $0.904\pm .055$\\
         & 1.0 & $0.648\pm .021$ & $0.850\pm .042$ &  $0.848\pm .046$\\
         \hline   
    \end{tabular}
\end{table}}

 As expected, performance improves in later rounds for nearly all methods, since contributions are easier to assess once the model has converged. In contrast, increasing the number of clients leads to reduced correlation, reflecting the widening gap between the exponentially complex Shapley value and our linear-time approximations. Nevertheless, both $\FP$ and $\EE$ stay stable, maintaining strong performance even under more challenging conditions such as early-round evaluation or larger client populations. 
 %Similar trends are observed in the other two tables. CE is more accurate with fewer clients, while it degrades as the number of clients increases.
Regarding data distribution, higher values of $\mu$ lead to reduced performance of the scoring schemes. This behavior is expected, since larger $\mu$ produces distributions closer to IID, making clients harder to distinguish. Nevertheless, $\FP$ and $\EE$ remain robust across all three aspects studied.

\subsection{Robustness}

Next, we analyze the impact of selfish behavior in the post-processing phase, focusing on scenarios where a client misreports its marginal contribution to gain an advantage. While $\FP$ and $\LOO$ are vulnerable to this type of manipulation, recall that clients cannot influence their own scores in the $\EE$ scheme. On the other hand, according to \eqref{eq:EE}, client $i$ can influence the score of others to the extent of $\big(v(M) - v(M_0 + U_i)\big) + \big(v(M - U_i) - v(M_0)\big)$. We present the relative, percentage-wise influence in Table~\ref{tab:attack}.

%i.e.,  this interaction by reporting the ratios below. 

%$$
%\frac{\big(v(M) - v(M_0 + U_i)\big) + \big(v(M - U_i) - v(M_0)\big)}{\EE(j)}
%$$

The main diagonal of the table is zero, since no client contributes to its own score. This demonstrates that $\EE$ is resistant to selfish behavior. Note that it remains vulnerable to strategic misreporting: a client could artificially deflate its reported value to diminish other clients' scores. This vulnerability can be mitigated by augmenting the averaging step in $\EE$'s definition~\eqref{eq:LEEO-IEEI} with anomaly detection techniques or by adopting Byzantine-resilient aggregation rules (e.g., substituting the summation with a median operator). A similar attack scenario under analogous trust assumptions was explored in the context of decentralized FL in~\cite{anada2025measuring}.

\begin{table}[!t]
\caption{The relative influence (e.g., the columns sum up to 1) of a single client on another's $\EE$ score using the Brain dataset with 6 clients ($\mu=0.5$) at the 10th communication round. We only show the mean as the variance is always below $0.01$.}
    \label{tab:attack}
    \centering
    \begin{tabular}{c|c|c|c|c|c|c}
        No. clients & 1 & 2 & 3 & 4 & 5 & 6 \\
       \hline  
        1 & $0.000$ & $0.145$ & $0.151$ & $0.169$ & $0.150$ & $0.167$ \\
        2 & $0.129$ & $0.000$ & $0.132$ & $0.147$ & $0.131$ & $0.145$ \\
        3 & $0.170$ & $0.167$ & $0.000$ & $0.194$ & $0.172$ & $0.192$ \\
        4 & $0.273$ & $0.268$ & $0.280$ & $0.000$ & $0.278$ & $0.309$ \\
        5 & $0.164$ & $0.161$ & $0.167$ & $0.187$ & $0.000$ & $0.185$ \\
        6 & $0.262$ & $0.257$ & $0.268$ & $0.300$ & $0.266$ & $0.000$ \\
        \hline 
    \end{tabular}
\end{table}

\subsection{Downstream tasks}

\paragraph{Performance Gain. }
To demonstrate practical utility beyond ranking alignment, we evaluate global model performance when aggregating client updates weighted by their CE scores. Specifically, we assess the effectiveness of our proposed scoring schemes when clients contribute data of varying quality, as in~\cite {pejo2023quality}. We consider IID clients with different levels of label noise: labels in each client's dataset are randomized with a linearly increasing probability before training. With 9 clients, the label-flipping rates are [0.00, 0.11, 0.22, \dots, 0.89, 1.00]
. During aggregation, each update is weighted by the current value of its normalized score, so that top-performing and bottom-performing participants receive weights above and below 1, respectively.

In Figure~\ref{fig:perf_cifar_isic}, it is visible that both $\FP$ and $\EE$ are superior to FedAvg (naive uniform aggregation) and to the baselines $\COS$ and $\LOO$. In fact, their performance almost reaches that of $\MRSV$. These results imply that 1) even the limited information available with SA could be used beneficially, and 2) the potential highest gain of client weighting is almost fully reachable, even with privacy in mind. Hence, both $\FP$ and $\EE$ successfully identify and upweight high-quality contributors while suppressing noise, translating their valuation information into tangible gains in model performance.

\input{tikZ}

\paragraph{Misbehavior Detection. }
Another frequently utilized benchmarking downstream task is the identification of dishonest participants. Again, we consider honest IID clients with a single exception: one is a Byzantine client executing an adversarial attack (label flipping). 

In Figure~\ref{fig:misbeh_cifar_isic}, we measure the detection rate of the scoring schemes by counting how often the attacker received the lowest score (out of ten experiments). It is visible that both $\FP$ and $\EE$ are superior to $\LOO$ and $\COS$, almost reaching the upper bound of $\MRSV$. These results imply that privacy-preserving misbehavior detection is practical. 

\section{Conclusion}
\label{sec:con}

We study contribution evaluation in cross-silo federated learning under secure aggregation, where individual client updates must remain hidden while reward allocation and accountability still require meaningful per-client scoring. This setting creates a fundamental mismatch: classical marginal contribution methods (including Shapley-style approaches) require evaluating many coalitions and accessing individual updates, whereas secure aggregation only exposes the empty and grand coalitions and prevents the server from inspecting client updates. To bridge this gap, we propose two privacy-preserving marginal difference scores fully compatible with secure aggregation: Fair-Private (FP), which leverages the limited coalitions available to each client to satisfy standard fairness axioms, and Everybody-Else (EE), which avoids self evaluation and provides manipulation resistance through indirect cross-scoring. We establish formal guarantees for fairness, privacy, and computational efficiency, and empirically demonstrate across multiple medical image datasets and CIFAR10 that FP and EE substantially outperform existing secure-aggregation-compatible baselines, closely approximating Shapley-induced rankings while remaining practical at scale.

Overall, our results indicate that FP and EE not only preserve Shapley-based fairness principles but also produce stable, interpretable client contributions under challenging federated learning conditions, including non-IID data and large client populations. This demonstrates that meaningful client valuations can be obtained efficiently and privately, while also supporting reward allocation and misbehavior detection. Their compatibility with standard federated frameworks enables deployment without costly architectural changes or added overhead.

Looking forward, key directions include extending private contribution evaluation to settings with client sampling, churn, or asynchronous training, while maintaining secure-aggregation compatibility, and further strengthening resistance to strategic manipulation by combining EE-style cross-evaluation with robust aggregation or anomaly detection to mitigate adversarial influence.
%We addressed the fundamental challenge of fairly and privately evaluating client contributions in federated learning under secure aggregation. We proposed two novel contribution evaluation methods that overcome the limitations of existing approaches: a privacy-preserving contribution evaluation score that relies on self evaluation but satisfies all desired fairness properties, and a score that relies on evaluations from other clients, thereby enhancing robustness to manipulation in selfish settings.

%We established theoretical guarantees for the desired properties of both schemes. Empirical evaluation on medical image datasets demonstrates that our scores closely approximate the Shapley value rankings, while significantly outperforming existing privacy preserving alternatives in accuracy. These results show that fairness, privacy, and even robustness can be reconciled in contribution evaluation for federated learning without resorting to computationally expensive techniques.

\subsection*{Acknowledgments}
Project no. 145832, implemented with the support provided by the Ministry of Innovation and Technology from the NRDI Fund, financed under the PD\_23 funding scheme. 

\bibliographystyle{ACM-Reference-Format}
\bibliography{ref}

%\newpage
\appendix

\section{Proof of Theorem~\ref{th:L1O-I1I_ax}}\label{sec:appTh1}

     If client $i$ is a null player, meaning it does not contribute to any coalition, then $v(M_0+U_i)=v(M_0)$ and $v(M-U_i)=v(M)$, implying that the $\alpha(i)=0$, and therefore $\FP(i)=0$. This means $\FP$ satisfies property \ref{ax:null}.  Similarly, if two clients $i$, and $j$ contribute equally to all coalitions, then  $v(M-U_i)=v(M-U_j)$, and $v(M_0+U_i)=v(M_0+U_j)$, which implies the following:
     {\small
     \begin{align*}
         \alpha(i)=\cfrac{1}{2}((v(M) - v(M-U_i))-(v(M_0+U_i)-v(M_0)))\\
         =\cfrac{1}{2}((v(M) - v(M-U_j))-(v(M_0+U_j)-v(M_0)))\\
         =\alpha(j)
     \end{align*}}
     Thus, $\FP(i)=\FP(j)$, and $\FP$ satisfies the symmetry property \ref{ax:sym}. Finally, Property~\ref{ax:eff}, and~\ref{ax:priv} follows directly from the definitions of $\FP$, and $\alpha(i)$, since $\sum_{i=1}^{N}\FP(i)=v(M)$, and the scores of the clients depend only on the models corresponding to the permitted coalitions.

\begin{remark}\label{rem:edgecase}
While possible, the edge case in which the denominator in \eqref{eq:FP} is zero is disregarded in our definition. If such a situation arises, we propose to replace $\alpha(i)$ with either $v(M) - v(M-U_i)$ or $v(M_0+U_i)-v(M_0)$. This way, the score still satisfies the same properties, but it is highly unlikely that both sum to zero.
\end{remark}

\section{Proof of Theorem~\ref{th:LEEOIEEI}}
\label{sec:appTh2}

    We focus on the parts of the metric. Property~\ref{ax:priv} follows directly from the definitions of \(\beta(i)\) and \(\gamma(i)\), since these terms are composed of the available coalitions. For Property~\ref{ax:inc}, note that the only values a client $i$ could potentially misreport are $v(M - U_i)$ and $v(M_0 + U_i)$. However, these terms do not appear in \(\beta(i)\) or \(\gamma(i)\). Therefore, \(\EE\) satisfies Property~\ref{ax:inc}.  To prove symmetry, assume that clients $i$ and $j$ contribute equally to all possible coalitions, i.e.,  $v(M-U_i)=v(M-U_j)$ and $v(M_0+U_i)=v(M_0+U_j)$.  Then, based on \eqref{eq:LEEO-IEEI}:
    {\small
    \begin{align*}
        \beta(i)=\hspace{6.5cm}\\
        =\frac{
        \sum\limits_{k\not\in\{i,j\}}[v(M)-v(M_0+U_k)]+[v(M)-v(M_0+U_j)]}{(N-1)^2}\\
        =\frac{\sum\limits_{k\not\in\{i,j\}}[v(M)-v(M_0+U_k)]+[v(M)-v(M_0+U_i)]}{(N-1)^2}\\
        =\beta(j)
    \end{align*}
    \begin{align*}
        \gamma(i)=\hspace{6.5cm}\\
        =\frac{
        \sum\limits_{k\not\in\{i,j\}}[v(M-U_k)-v(M_0)]+[v(M-U_j)-v(M_0)]}{(N-1)^2}\\
        =\frac{\sum\limits_{k\not\in\{i,j\}}[v(M-U_k)-v(M_0)]+[v(M-U_i)-v(M_0)]}{(N-1)^2}\\
        =\gamma(j)
    \end{align*}}
    Thus, it is obtained $\EE(i)=\EE(j)$, this is $\EE$ satisfies symmetry property \ref{ax:sym}. Finally, Property~\ref{ax:eff} is also satisfied, as the same normalizing step is used as in $\FP$. Note that, for this reason, the same edge case could also emerge, as we detailed above. 
    
\section{Proof of Theorem~\ref{th:complex}}
\label{sec:appTh3}

We prove the statement for each individual building block of the metrics, ensuring that their combination also inherits this validity.
From Definition~\ref{def:FP2}, it is clear that computing $\alpha(i)$ for a single client requires $\mathcal{O}(1)$ operations. Thus, computing the scores for all $N$ clients requires $\mathcal{O}(N)$ time overall.
Similarly, according to \eqref{eq:LEEO-IEEI},  the complexity of computing the terms $\beta(i)$ and $\gamma(i)$ for a single client is $\mathcal{O}(N)$. However, the evaluations $v(M_0+U_j)$ and $v(M-U_j)$ for each client $j$ are shared when computing the scores for different clients $i$, since each evaluation depends only on client $j$, not on the client being scored. Therefore, the overall computational cost remains $\mathcal{O}(N)$.

%\section{$\EE$ score with Banzhaf-style normalization}
\label{sec:appEEb}

\section{Experimental Setup}
\label{app:Experiments}

{\footnotesize
\begin{table}[H]
    \centering
    \caption{Overview of the experimental setup}
    \label{tab:datasets}
    \begin{tabular}{l|c|c|c|c!}
        %\hline 
         & \textbf{ISIC2019} & \textbf{PCam} & \textbf{Brain-MRI} & \textbf{CIFAR10} \\
        \hline 
        \textbf{Task} & Multi-class & Binary-class & Multi-class & Multi-class \\
        \textbf{Pred.} & Melanoma & Metastatic & Tumor & Image \\
        \textbf{Model} & ResNet50 & ResNet18 &
        Simple CNN & Simple CNN \\
        \textbf{Size} & 23,247 & 327,680 & 7,023 & 60,000 \\
        \textbf{L-Epoch} & 3 & 5 & 5 & 5 \\
        \hline 
    \end{tabular}
\end{table}}

\section{The efficiency-linearity trade and semivalues}\label{app:semivalues}

The practical and conceptual shift introduced by federated learning brings new challenges. In particular, data valuation in federated learning aims not only to identify which sets of data points contribute more to the final model but also to satisfy clients’ fairness and privacy requirements. This shift is reflected in the increased importance of fairness properties in the federated learning setting. For instance, the efficiency property is significantly more relevant in federated learning than in the centralized scenario because contribution scores are assigned to clients rather than to individual data points. In this context, contribution evaluation is ultimately tied to resource allocation decisions (e.g., incentives, quotas, or priorities), which typically operate under a fixed budget. This stands in contrast to centralized data valuation, where the objective is simply to measure which data points improve model performance the most. Semivalues are appropriate when the goal is to describe players in a game without requiring a fair division~\cite{with_effic}. We view centralized data valuation as fitting this description.

As alternatives to the Shapley value, Beta Shapley and Data Banzhaf semivalues have been proposed to address data valuation problems in machine learning, particularly in settings such as noisy label detection, where the Shapley value has been shown to be suboptimal~\cite{kwon2021beta,data_banzhaf}. Specifically, a semivalue is a function in the space of games that satisfies the symmetry, null player, and linearity properties, but may not satisfy the efficiency property. 

Federated learning commonly requires dividing a fixed resource among participants, and in such scenarios, efficiency becomes a natural, and often necessary, requirement. We define our scores with this consideration in mind. For this reason, they lie outside the family of semivalues, with the exception of the unnormalized score $\FP$, i.e., $\alpha(i)$ (see Equation~\ref{eq:FP}). In contrast to the role of efficiency in federated learning, linearity can be relaxed in this setting. Linearity is primarily an algebraic and technical condition in cooperative game theory rather than a fairness axiom. Its main role is to establish uniqueness results or, in the case of semivalues, to characterize their functional form. Since our objective is not to prove the existence of unique solutions under a specific set of axioms, relaxing linearity does not compromise fairness guarantees or practical interpretability. As a result, analyzing the unnormalized score satisfying linearity offers limited additional insight from either a fairness or an applied perspective.

Overall, this discussion highlights that the choice of axioms underlying a data valuation score cannot be decoupled from the structure of the underlying machine learning task. While centralized settings may favor descriptive measures that relax efficiency, federated learning inherently demands allocation-aware scores that respect budget constraints. Understanding this interplay between axiomatic properties and application-specific requirements is, therefore, a central component of the data valuation endeavor.
\end{document}

%% file: tikZ.tex
% ================== PLOT 1==================
\begin{figure*}[!t]
%\centering
\vspace{-3mm}
\caption{The negative loss of the global model on the CIFAR10 (left), and ISIC2019 (right) datasets. Training is for $9$ clients with $10$ communication rounds, and the results of mean and variance are reported for the 10th run.}
\label{fig:perf_cifar_isic}
\hspace{0.75cm}
% --------------------  -------------------
\begin{minipage}[t]{0.42\textwidth}
\centering
\begin{tikzpicture}
\begin{axis}[
    width=\linewidth,
    height=6.2cm,
    xlabel={Communication round},
    ymajorgrids=false,
    xmajorgrids=false,
    grid style=dashed,
    legend=false
]

% =====================================================
% Curve blue
% =====================================================
\addplot[name path=mean5, thick, blue]
coordinates {
(0,-2.316019) (1,-2.020275) (2,-1.833844) (3,-1.702725) (4,-1.655008)
(5,-1.635281) (6,-1.630719) (7,-1.635935) (8,-1.664870) (9,-1.668974)
};

\addplot[name path=upper5, draw=none]
coordinates {
(0,-2.315394) (1,-1.987080) (2,-1.800701) (3,-1.698899) (4,-1.648382)
(5,-1.628225) (6,-1.627287) (7,-1.630520) (8,-1.659320) (9,-1.662161)
};

\addplot[name path=lower5, draw=none]
coordinates {
(0,-2.316643) (1,-2.053469) (2,-1.866986) (3,-1.706550) (4,-1.661635)
(5,-1.642337) (6,-1.634150) (7,-1.641350) (8,-1.670419) (9,-1.675787)
};

\addplot[blue!30, fill opacity=0.35]
fill between[of=upper5 and lower5];

% =====================================================
% Curve 2 (orange)
% =====================================================
\addplot[name path=mean2, thick, orange]
coordinates {
(0,-2.307204) (1,-2.296447) (2,-2.246389) (3,-2.184027) (4,-2.156139)
(5,-2.110454) (6,-2.079518) (7,-2.102336) (8,-2.073489) (9,-2.039742)
};

\addplot[name path=upper2, draw=none]
coordinates {
(0,-2.307197) (1,-2.296158) (2,-2.244197) (3,-2.181412) (4,-2.154804)
(5,-2.109113) (6,-2.078347) (7,-2.099904) (8,-2.070514) (9,-2.038675)
};

\addplot[name path=lower2, draw=none]
coordinates {
(0,-2.307212) (1,-2.296737) (2,-2.248582) (3,-2.186641) (4,-2.157475)
(5,-2.111794) (6,-2.080689) (7,-2.104767) (8,-2.076464) (9,-2.040809)
};

\addplot[orange!30, fill opacity=0.35]
fill between[of=upper2 and lower2];

% =====================================================
% Curve 3 (green)
% =====================================================
\addplot[name path=mean3, thick, green!60!black]
coordinates {
(0,-2.302813) (1,-2.137063) (2,-2.004226) (3,-1.905013) (4,-1.900394)
(5,-1.889672) (6,-1.844544) (7,-1.878087) (8,-1.890657) (9,-1.921745)
};

\addplot[name path=upper3, draw=none]
coordinates {
(0,-2.302796) (1,-2.123550) (2,-1.975930) (3,-1.857367) (4,-1.847892)
(5,-1.831341) (6,-1.783771) (7,-1.823073) (8,-1.840111) (9,-1.878592)
};

\addplot[name path=lower3, draw=none]
coordinates {
(0,-2.302831) (1,-2.150575) (2,-2.032523) (3,-1.952660) (4,-1.952896)
(5,-1.948004) (6,-1.905316) (7,-1.933101) (8,-1.941202) (9,-1.964898)
};

\addplot[green!30, fill opacity=0.35]
fill between[of=upper3 and lower3];

% =====================================================
% Curve 4 (red)
% =====================================================
\addplot[name path=mean4, thick, red]
coordinates {
(0,-2.301046) (1,-1.999800) (2,-1.759503) (3,-1.671811) (4,-1.640208)
(5,-1.631818) (6,-1.628247) (7,-1.659148) (8,-1.666336) (9,-1.676239)
};

\addplot[name path=upper4, draw=none]
coordinates {
(0,-2.299836) (1,-1.959284) (2,-1.736402) (3,-1.666916) (4,-1.633050)
(5,-1.627136) (6,-1.624391) (7,-1.652657) (8,-1.661709) (9,-1.670109)
};

\addplot[name path=lower4, draw=none]
coordinates {
(0,-2.302256) (1,-2.040316) (2,-1.782605) (3,-1.676707) (4,-1.647367)
(5,-1.636499) (6,-1.632104) (7,-1.665639) (8,-1.670962) (9,-1.682369)
};

\addplot[red!30, fill opacity=0.35]
fill between[of=upper4 and lower4];

% =====================================================
% Curve  (purple)
% =====================================================
\addplot[name path=mean1, thick, purple]
coordinates {
(0,-2.308436) (1,-1.990556) (2,-1.760540) (3,-1.689985) (4,-1.651580)
(5,-1.667711) (6,-1.667302) (7,-1.700643) (8,-1.713370) (9,-1.7231140)
};

\addplot[name path=upper1, draw=none]
coordinates {
(0,-2.308260) (1,-1.950240) (2,-1.741003) (3,-1.660255) (4,-1.632210)
(5,-1.655264) (6,-1.651864) (7,-1.684728) (8,-1.708721) (9,-1.718312)
};

\addplot[name path=lower1, draw=none]
coordinates {
(0,-2.308612) (1,-2.030873) (2,-1.780076) (3,-1.719716) (4,-1.670950)
(5,-1.680158) (6,-1.682739) (7,-1.716559) (8,-1.718020) (9,-1.729967)
};

\addplot[purple!30, fill opacity=0.35]
fill between[of=upper1 and lower1];

% =====================================================
% Curve 6 (violet)
% =====================================================
\addplot[name path=mean6, thick, violet]
coordinates {
(0,-2.308197) (1,-2.305030) (2,-2.256275) (3,-2.252607) (4,-2.250978)
(5,-2.193883) (6,-2.211439) (7,-2.209964) (8,-2.175933) (9,-2.178889)
};

\addplot[name path=upper6, draw=none]
coordinates {
(0,-2.308186) (1,-2.305018) (2,-2.252926) (3,-2.249914) (4,-2.248691)
(5,-2.185368) (6,-2.201149) (7,-2.200777) (8,-2.161261) (9,-2.168674)
};

\addplot[name path=lower6, draw=none]
coordinates {
(0,-2.308207) (1,-2.305041) (2,-2.259624) (3,-2.255300) (4,-2.253265)
(5,-2.202398) (6,-2.221729) (7,-2.219152) (8,-2.190604) (9,-2.189105)
};

\addplot[violet!30, fill opacity=0.35]
fill between[of=upper6 and lower6];

\end{axis}
\end{tikzpicture}
\end{minipage}
%\hspace{0.4cm}
% ================== PLOT  1==================
\begin{minipage}[t]{0.42\textwidth}
\centering
\begin{tikzpicture}
\begin{axis}[
    width=\linewidth,
    height=6.2cm,
    xlabel={Communication round},
    ymajorgrids=true,
    xmajorgrids=false,
    grid style=dashed,
    legend=false
]

% Curve 1 (blue)
% =====================================================
\addplot[name path=mean4, thick, blue]
coordinates {
(0,-2.321584) (1,-2.144289) (2,-1.892157) (3,-1.835221) (4,-1.808081)
(5,-1.778006) (6,-1.781441) (7,-1.792060) (8,-1.803350) (9,-1.821012)
};

\addplot[name path=upper4, draw=none]
coordinates {
(0,-2.321519) (1,-2.113121) (2,-1.858206) (3,-1.783741) (4,-1.748140)
(5,-1.716400) (6,-1.717750) (7,-1.722213) (8,-1.734690) (9,-1.769339)
};

\addplot[name path=lower4, draw=none]
coordinates {
(0,-2.321649) (1,-2.175456) (2,-1.926108) (3,-1.886701) (4,-1.868022)
(5,-1.839613) (6,-1.845133) (7,-1.861908) (8,-1.872011) (9,-1.872685)
};

\addplot[blue!30, fill opacity=0.35]
fill between[of=upper4 and lower4];

%=====================================================
% Curve 2 (orange)
% =====================================================
\addplot[name path=mean2, thick, orange]
coordinates {
(0,-2.306302) (1,-2.297072) (2,-2.275994) (3,-2.266683) (4,-2.276340)
(5,-2.251951) (6,-2.258646) (7,-2.266636) (8,-2.245621) (9,-2.237568)
};

\addplot[name path=upper2, draw=none]
coordinates {
(0,-2.306294) (1,-2.296892) (2,-2.273138) (3,-2.261478) (4,-2.273559)
(5,-2.241623) (6,-2.250848) (7,-2.261422) (8,-2.232540) (9,-2.220579)
};

\addplot[name path=lower2, draw=none]
coordinates {
(0,-2.306311) (1,-2.297252) (2,-2.278850) (3,-2.271888) (4,-2.279120)
(5,-2.262279) (6,-2.266444) (7,-2.271850) (8,-2.258702) (9,-2.254557)
};

\addplot[orange!30, fill opacity=0.35]
fill between[of=upper2 and lower2];

% =====================================================
% =====================================================
% =====================================================
% Curve 3 (green)
% =====================================================
\addplot[name path=mean3, thick, green!60!black]
coordinates {
(0,-2.305704) (1,-2.215846) (2,-2.187688) (3,-2.149644) (4,-2.131883)
(5,-2.114022) (6,-2.116360) (7,-2.103584) (8,-2.100804) (9,-2.112131)
};

\addplot[name path=upper3, draw=none]
coordinates {
(0,-2.225702) (1,-2.215356) (2,-2.185203) (3,-2.145894) (4,-2.126135)
(5,-2.105363) (6,-2.106678) (7,-2.093210) (8,-2.088462) (9,-2.103219)
};

\addplot[name path=lower3, draw=none]
coordinates {
(0,-2.225706) (1,-2.216336) (2,-2.190172) (3,-2.153394) (4,-2.137631)
(5,-2.122682) (6,-2.126043) (7,-2.113959) (8,-2.113147) (9,-2.121043)
};

\addplot[green!30, fill opacity=0.35]
fill between[of=upper3 and lower3];

% =====================================================
% Curve 4 (red)
% =====================================================
\addplot[name path=mean1, thick, red]
coordinates {
(0,-2.311920) (1,-2.182142) (2,-1.964947) (3,-1.870282) (4,-1.891720)
(5,-1.867128) (6,-1.841217) (7,-1.845066) (8,-1.881905) (9,-1.885857)
};

\addplot[name path=upper1, draw=none]
coordinates {
(0,-2.311745) (1,-2.155065) (2,-1.894768) (3,-1.774893) (4,-1.805401)
(5,-1.778967) (6,-1.745562) (7,-1.764206) (8,-1.806764) (9,-1.819072)
};

\addplot[name path=lower1, draw=none]
coordinates {
(0,-2.312094) (1,-2.209220) (2,-2.035125) (3,-1.965671) (4,-1.978038)
(5,-1.955288) (6,-1.936873) (7,-1.925927) (8,-1.957046) (9,-1.952643)
};

\addplot[red!30, fill opacity=0.35]
fill between[of=upper1 and lower1];

% 

% =====================================================
% Curve 5 (purple)
% =====================================================
\addplot[name path=mean5, thick, purple]
coordinates {
(0,-2.310250) (1,-2.014840) (2,-1.850614) (3,-1.802264) (4,-1.769643)
(5,-1.789403) (6,-1.799213) (7,-1.824341) (8,-1.880498) (9,-1.880202)
};

\addplot[name path=upper5, draw=none]
coordinates {
(0,-2.309714) (1,-1.970263) (2,-1.773450) (3,-1.719462) (4,-1.686468)
(5,-1.709167) (6,-1.716055) (7,-1.745923) (8,-1.796141) (9,-1.822558)
};

\addplot[name path=lower5, draw=none]
coordinates {
(0,-2.310787) (1,-2.059417) (2,-1.927778) (3,-1.885066) (4,-1.852817)
(5,-1.869639) (6,-1.882372) (7,-1.902759) (8,-1.964856) (9,-1.992245)
};

\addplot[purple!30, fill opacity=0.35]
fill between[of=upper5 and lower5];

% =====================================================
% Curve 6 (violet)
% =====================================================
\addplot[name path=mean6, thick, violet]
coordinates {
(0,-2.305704) (1,-2.295846) (2,-2.267688) (3,-2.229644) (4,-2.211883)
(5,-2.194022) (6,-2.196360) (7,-2.183584) (8,-2.180804) (9,-2.192131)
};

\addplot[name path=upper6, draw=none]
coordinates {
(0,-2.305702) (1,-2.295356) (2,-2.265203) (3,-2.225894) (4,-2.206135)
(5,-2.185363) (6,-2.186678) (7,-2.173210) (8,-2.168462) (9,-2.183219)
};

\addplot[name path=lower6, draw=none]
coordinates {
(0,-2.305706) (1,-2.296336) (2,-2.270172) (3,-2.233394) (4,-2.217631)
(5,-2.202682) (6,-2.206043) (7,-2.193959) (8,-2.193147) (9,-2.201043)
};

\addplot[violet!30, fill opacity=0.35]
fill between[of=upper6 and lower6];

\end{axis}
\end{tikzpicture}
\end{minipage}
\hfill
\vspace{1mm}
% ===================== LEGEND (BOTTOM) =====================
\begin{minipage}{0.6\textwidth}
\centering
\begin{tikzpicture}
    \draw[blue, thick] (0,0) -- (0.5,0);
    \node[right] at (0.6,0) {$\MRSV$};

    \draw[orange, thick] (2.0,0) -- (2.5,0);
    \node[right] at (2.6,0) {$\COS$};

    \draw[green!60!black, thick] (3.5,0) -- (4.0,0);
    \node[right] at (4.1,0) {$\LOO$};

    \draw[red, thick] (5.0,0) -- (5.5,0);
    \node[right] at (5.6,0) {$\FP$};

    \draw[purple, thick] (6.4,0) -- (6.9,0);
    \node[right] at (7.0,0) {$\EE$};

    \draw[violet, thick] (7.6,0) -- (8.1,0);
    \node[right] at (8.2,0) {$\mathtt{FedAvg}$};
\end{tikzpicture}
\end{minipage}
\end{figure*}
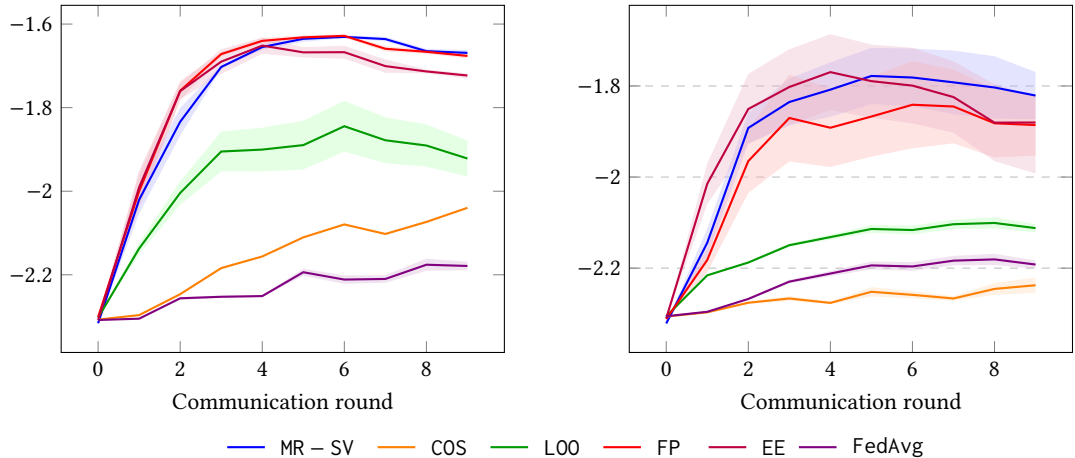

%\hspace{-4mm}
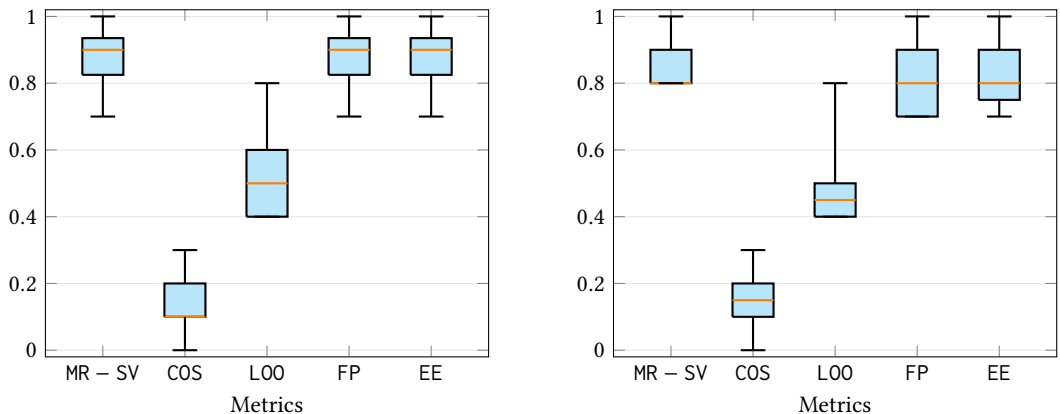
\begin{figure*}[!t]
%\centering
%\vspace{-1mm}
\caption{Misbehavior detection score (what is the probability that the attacker gets the lowest score) on the CIFAR10 (left), and ISIC2019 (right) datasets. Training is for $9$ clients with $10$ communication rounds, and results are reported for the 10th run. }
\label{fig:misbeh_cifar_isic}

\begin{minipage}[t]{0.42\textwidth}
%\centering
\begin{tikzpicture}
\begin{axis}[
    width=\linewidth,
    height=6.2cm,
    xlabel={Metrics},
    ymin=-0.02, ymax=1.02,
    ymajorgrids=true,
    xmajorgrids=false,
    grid style={draw=gray!20},
    xtick={1,2,3,4,5},
    xticklabels={$\MRSV$,$\COS$,$\LOO$,$\FP$,$\EE$},
]

\manualbox{1}{0.70}{0.825}{0.90}{0.935}{1.00}
\manualbox{2}{0.00}{0.10}{0.10}{0.20}{0.30}
\manualbox{3}{0.40}{0.40}{0.50}{0.60}{0.80}
\manualbox{4}{0.70}{0.825}{0.90}{0.935}{1.00}
\manualbox{5}{0.70}{0.825}{0.90}{0.935}{1.00}
\end{axis}
\end{tikzpicture}
\end{minipage}
%\hspace{0.2cm}
\begin{minipage}[t]{0.42\textwidth}
%\centering
\begin{tikzpicture}
\begin{axis}[
    width=\linewidth,
    height=6.2cm,
    xlabel={Metrics},
    ymin=-0.02, ymax=1.02,
    ymajorgrids=true,
    xmajorgrids=false,
    grid style={draw=gray!20},
    xtick={1,2,3,4,5},
    xticklabels={$\MRSV$,$\COS$,$\LOO$,$\FP$,$\EE$},
]
\manualbox{1}{0.80}{0.80}{0.80}{0.90}{1.00}
\manualbox{2}{0.00}{0.10}{0.15}{0.20}{0.30}
\manualbox{3}{0.40}{0.40}{0.45}{0.50}{0.80}
\manualbox{4}{0.70}{0.70}{0.80}{0.90}{1.00}
\manualbox{5}{0.70}{0.75}{0.80}{0.90}{1.00}
\end{axis}
\end{tikzpicture}
\end{minipage}
\end{figure*}

%% file: ref.bib
@article{with_effic,
 author = {Pradeep Dubey and Abraham Neyman and Robert James Weber},
 journal = {Mathematics of Operations Research},
 number = {1},
 pages = {122--128},
 title = {Value Theory without Efficiency},
 volume = {6},
 year = {1981}
}

@Inproceedings{Cosine_pitfalls,
title={The Hidden Pitfalls of the Cosine Similarity Loss},
author={Andrew Draganov and Sharvaree Vadgama and Erik J Bekkers},
booktitle={ICML workshop: High-dimensional Learning Dynamics 2024: The Emergence of Structure and Reasoning},
year={2024},
}

@InProceedings{collabotative,
  title = 	 {Collaborative Machine Learning with Incentive-Aware Model Rewards},
  author =       {Sim, Rachael Hwee Ling and Zhang, Yehong and Chan, Mun Choon and Low, Bryan Kian Hsiang},
  booktitle = 	 {Proc. Int. Conf. Mach. Learn. (ICML)},
  year = 	 {2020},
}

@article{anada2025measuring,
  title={Measuring Participant Contributions in Decentralized Federated Learning},
  author={Anada, Honoka and Kaneko, Tatsuya and Takamaeda-Yamazaki, Shinya},
  journal={arXiv preprint: 2505.23246  [cs.LG]},
  year={2025}
}

@inproceedings{gan_sv,
  author    = {Q. Li and X. Li and Z. Liu and H. Qi},
  title     = {pFedCE: Personalized Federated Learning Based on Contribution Evaluation},
  booktitle = {Proc. Int. Conf. Artif. Intell., Robotics, and Communication (ICAIRC)},
  year      = {2024},
  
  
}

@inproceedings{con_gan,
  author    = {A. Odena and C. Olah and J. Shlens},
  title     = {Conditional Image Synthesis with Auxiliary Classifier {GAN}s},
  booktitle = {Proc. Int. Conf. Mach. Learn. (ICML)},
  year      = {2017},
}

@article{privacy_gan_two,
  author  = {R. Venugopal and N. Shafqat and I. Venugopal and B. M. J. Tillbury and H. D. Stafford and A. Bourazeri},
  title   = {Privacy Preserving Generative Adversarial Networks to Model Electronic Health Records},
  journal = {Neural Networks},
  volume  = {153},
  pages   = {339--348},
  year    = {2022},
}

@article{privacy_gan_one,
  author  = {A. Yale and S. Dash and R. Dutta and I. Guyon and A. Pavao and K. P. Bennett},
  title   = {Generation and Evaluation of Privacy Preserving Synthetic Health Data},
  journal = {Neurocomputing},
  volume  = {416},
  pages   = {244--255},
  year    = {2020},
}

@inproceedings{Cifar10,
  author    = {K. He and X. Zhang and S. Ren and J. Sun},
  title     = {Deep residual learning for image recognition},
  booktitle = {Proc. IEEE Conf. Comput. Vision Pattern Recognition (CVPR)},
  year      = {2016},
}

@inproceedings{ISIC,
  author    = {N. C. F. Codella and D. Gutman and M. E. Celebi and B. Helba and M. A. Marchetti 
               and S. W. Dusza and A. Kalloo and K. Liopyris and N. Mishra 
               and H. Kittler and A. Halpern},
  title     = {Skin lesion analysis toward melanoma detection: A challenge at the 2017 {ISBI}, 
               hosted by the International Skin Imaging Collaboration ({ISIC})},
  booktitle = {Proc. IEEE Int. Symp. Biomedical Imaging (ISBI)},
  year      = {2018},
}

@article{Pcam,
  author  = {B. S. Veeling and J. Linmans and J. Winkens and T. Cohen and M. Welling},
  title   = {Rotation Equivariant {CNNs} for Digital Pathology},
  journal = {arXiv preprint: 1806.03962 [cs.CV]},
  month   = {Jun},
  year    = {2018}
}

@article{shamir79,
  author  = {A. Shamir},
  title   = {How to Share a Secret},
  journal = {Commun. ACM},
  volume  = {22},
  number  = {11},
  pages   = {612--613},
  month   = {Nov.},
  year    = {1979}
}

@article{xhemrishi2025detect,
  author  = {M. Xhemrishi and A. Graell i Amat and B. Pejó},
  title   = {Detect \& Score: Privacy-Preserving Misbehaviour Detection and Contribution Evaluation in Federated Learning},
  journal = {arXiv preprint: 2506.23583  [cs.CR]},  
  year    = {2025}
}

@INPROCEEDINGS{Leak,
  author={Wang, Zhibo and Song, Mengkai and Zhang, Zhifei and Song, Yang and Wang, Qian and Qi, Hairong},
  booktitle={IEEE Conference on Computer Communications (INFOCOM)}, 
  title={Beyond Inferring Class Representatives: User-Level Privacy Leakage From Federated Learning}, 
  year={2019},
}

@misc{brain,
  author       = {M. Nickparvar},
  title        = {Brain Tumor {MRI} Dataset},
  year         = {2021},
  howpublished = {\url{https://www.kaggle.com/datasets/masoudnickparvar/brain-tumor-mri-dataset}},
  note         = {Accessed: 2025-04-30}
}

@article{kairouz2021advances,
  author  = {P. Kairouz and H. B. McMahan and B. Avent and A. Bellet and M. Bennis and A. N. Bhagoji and et al.},
  title   = {Advances and open problems in federated learning},
  journal = {Found. Trends Mach. Learn.},
  volume  = {14},
  year    = {2021},
}

@book{cramer2015secure,
  author    = {R. Cramer and I. B. Damg{\aa}rd and J. B. Nielsen},
  title     = {Secure Multiparty Computation and Secret Sharing},
  publisher = {Cambridge University Press},
  year      = {2015}
}

@article{heyndrickx2023melloddy,
  title={Melloddy: Cross-pharma federated learning at unprecedented scale unlocks benefits in qsar without compromising proprietary information},
  author={Heyndrickx, Wouter and Mervin, Lewis and Morawietz, Tobias and Sturm, No{\'e} and Friedrich, Lukas and Zalewski, Adam and Pentina, Anastasia and Humbeck, Lina and Oldenhof, Martijn and Niwayama, Ritsuya and others},
  journal={Journal of chemical information and modeling},
  year={2023},
  publisher={ACS Publications}
}

@book{pejo2022guide,
  title={Guide to differential privacy modifications: a taxonomy of variants and extensions},
  author={Pej{\'o}, Bal{\'a}zs and Desfontaines, Damien},
  year={2022},
  publisher={Springer Nature}
}

@inproceedings{bonawitz2017practical,
  title={Practical secure aggregation for privacy-preserving machine learning},
  author={Bonawitz, Keith and Ivanov, Vladimir and Kreuter, Ben and Marcedone, Antonio and McMahan, H Brendan and Patel, Sarvar and Ramage, Daniel and Segal, Aaron and Seth, Karn},
  booktitle={Proc. ACM SIGSAC Conf. Computer and Communications Security},
  year={2017}
}

@InProceedings{data_banzhaf,
  title = 	 {Data Banzhaf: A Robust Data Valuation Framework for Machine Learning},
  author =       {Wang, Jiachen T. and Jia, Ruoxi},
  booktitle = 	 {Proceedings of The 26th International Conference on Artificial Intelligence and Statistics},
  year = 	 {2023},
  publisher =    {PMLR} 
}

@article{banzhaf,
  author  = {P. Dubey and L. S. Shapley},
  title   = {Mathematical properties of the {B}anzhaf power index},
  journal = {Math. Oper. Res.},
  volume  = {4},
  number  = {2},
  pages   = {99--131},
  year    = {1979}
}

@article{lin2024comprehensive,
  author  = {H. Lin and S. Wan and Z. Xie and K. Chen and M. Zhang and L. Shou and G. Chen},
  title   = {A Comprehensive Study of Shapley Value in Data Analytics},
  journal = {arXiv preprint: 2412.01460 [cs.DB]},
  year    = {2024}
}

@article{xhemrishi2023fedgt,
  author  = {M. Xhemrishi and J. Östman and A. Wachter-Zeh and A. Graell i Amat},
  title   = {{FedGT}: Identification of Malicious Clients in Federated Learning With Secure Aggregation},
  journal = {IEEE Trans. Inf. Forensics Security},
  volume  = {20},
  year    = {2025}
}

@inproceedings{black2021leave,
  author       = {E. Black and M. Fredrikson},
  title        = {Leave-one-out unfairness},
  booktitle    = {Proc. ACM Conf. Fairness, Accountability, and Transparency (FAccT)},      
  year         = {2021},
}

@article{evgeniou2004leave,
  author  = {T. Evgeniou and M. Pontil and A. Elisseeff},
  title   = {Leave one out error, stability, and generalization of voting combinations of classifiers},
  journal = {Mach. Learn.},
  volume  = {55},
  number  = {1},
  pages   = {1--30},
  year    = {2004}
}

@article{pejo2023quality,
  author  = {B. Pejó and G. Biczók},
  title   = {Quality Inference in Federated Learning with Secure Aggregation},
  journal = {IEEE Trans. Big Data},
  volume  = {9},
  number  = {5},
  year    = {2023}
}

@inproceedings{zhou2022survey,
  author    = {S. Zhou and M. Liao and B. Qiao and X. Yang},
  title     = {A Survey of Security Aggregation},
  booktitle = {Proc. Int. Conf. Adv. Commun. Technol. (ICACT)},
  year      = {2022}
}

@inproceedings{rozemberczki2022shapley,
  author    = {B. Rozemberczki and L. Watson and P. Bayer and H.-T. Yang and O. Kiss and S. Nilsson and et al.},
  title     = {The Shapley Value in Machine Learning},
  booktitle = {Proc. Int. Joint Conf. Artif. Intell. (IJCAI)},
  year      = {2022}
 
}

@InProceedings{ghorbani2019data,
  title = 	 {Data Shapley: Equitable Valuation of Data for Machine Learning},
  author =       {Ghorbani, Amirata and Zou, James},
  booktitle = 	 {Proc. Int. Conf. on Mach. Learn. (ICML)},
  year = 	 {2019},
}

@incollection{cohen2009pearson,
  author    = {I. Cohen and Y. Huang and J. Chen and J. Benesty},
  title     = {Pearson correlation coefficient},
  booktitle = {Noise Reduction in Speech Processing}, 
  publisher = {Springer},
  year      = {2009},
    
}

@incollection{abdi2007kendall,
  author    = {H. Abdi},
  title     = {The {K}endall rank correlation coefficient},
  booktitle = {Encyclopedia of Measurement and Statistics},     
  publisher = {Sage},
  address   = {Thousand Oaks, CA},
  year      = {2007},
           
}

@article{wu2021fast,
  author  = {H. Wu and P. Wang},
  title   = {Fast-Convergent Federated Learning with Adaptive Weighting},
  journal = {IEEE Trans. Cogn. Commun. Netw.},
  volume  = {7},
  number  = {4},
  year    = {2021}
}

@article{evans2024data,
  author  = {N. J. Evans and G. B. Mills and G. Wu and X. Song and S. McWeeney},
  title   = {Data valuation with gradient similarity},
  journal = {arXiv preprint: 2405.08217 [cs.LG]},
  year    = {2024}
}

@inproceedings{xu2021gradient,
  author  = {X. Xu and L. Lyu and X. Ma and C. Miao and C. S. Foo and B. K. H. Low},
  title   = {Gradient Driven Rewards to Guarantee Fairness in Collaborative Machine Learning},
  booktitle = {Proc. Adv. Neural Inf. Process. Syst. (NeurIPS)},
  year    = {2021}
}

@book{shapley1951notes,
  author    = {L. S. Shapley},
  title     = {Notes on the N-person Game},
  publisher = {Rand Corporation},
  year      = {1951}
}

@article{siomos2023contribution,
  author  = {V. Siomos and J. Passerat-Palmbach},
  title   = {Contribution Evaluation in Federated Learning: Examining Current Approaches},
  journal = {arXiv preprint: 2311.09856 [cs.LG]},
  year    = {2023}
}

@article{zar2005spearman,
  author  = {J. H. Zar},
  title   = {Spearman Rank Correlation},
  journal = {Encyclopedia of Biostatistics},
  year    = {2005}
}

@article{shapley1953value,
	title={A value for n-person games},
	author={Shapley, Lloyd S},
	journal={Contributions to the Theory of Games},
	year={1953}
}

@inproceedings{ma2021transparent,
  author    = {S. Ma and Y. Cao and L. Xiong},
  title     = {Transparent Contribution Evaluation for Secure Federated Learning on Blockchain},
  booktitle = {Proc. Int. Conf. Data Eng. Workshops (ICDEW)},
  year      = {2021}
}

@inproceedings{zheng2022secure,
    author = {S. Zheng and Y. Cao and M. Yoshikawa},
    title = {Secure Shapley Value for Cross-Silo Federated Learning},
    booktitle = {Proc. VLDB Endowment},
    year = {2023}
}

@article{watson2022differentially,
  author  = {L. Watson and R. Andreeva and H.-T. Yang and R. Sarkar},
  title   = {Differentially Private Shapley Values for Data Evaluation},
  journal = {arXiv preprint: 2206.00511 [cs.LG]},
  year    = {2022}
}

@article{kwon2021beta,
  title={Beta Shapley: a Unified and Noise-reduced Data Valuation Framework for Machine Learning},
  author={Kwon, Yongchan and Zou, James},
  journal={arXiv preprint arXiv:2110.14049},
  year={2021}
}

@article{liu2021gtg,
  author  = {Z. Liu and Y. Chen and H. Yu and Y. Liu and L. Cui},
  title   = {GTG-Shapley: Efficient and Accurate Participant Contribution Evaluation in Federated Learning},
  journal = {ACM Trans. Intell. Syst. Technol.},
  volume  = {13},
  number  = {4},
  pages   = {60:1--60:21},
  year    = {2022}
}

@inproceedings{song2019profit,
  author    = {T. Song and Y. Tong and S. Wei},
  title     = {Profit Allocation for Federated Learning},
  booktitle = {Proc. IEEE Int. Conf. Big Data (Big Data)},
  year      = {2019}
}

@inproceedings{zhu2019deep,
	title={Deep leakage from gradients},
	author={Zhu, Ligeng and Liu, Zhijian and Han, Song},
	booktitle={Advances in Neural Information Processing Systems},
	year={2019}
}

@inproceedings{ijcai2022p782,
  author    = {R. H. L. Sim and X. Xu and B. K. H. Low},
  title     = {Data Valuation in Machine Learning: "Ingredients", Strategies, and Open Challenges},
  booktitle = {Proc. Int. Joint Conf. Artif. Intell. (IJCAI)},
  year      = {2022}
}
